\newtheorem{thm}{Theorem}
\newtheorem{lem}[thm]{Lemma}
\newcommand{\hcc}{\hc{c}}
\newcommand{\hcd}{\hc{d}}
\newcommand{\tz}{\tau^{z}}
\newcommand{\rmm}[1]{{\rm{#1}}}
\newcommand{\intd}{\int \rmm{d}}
\newcommand{\del}[2]{ \frac{\partial #1}{\partial #2}}
\newcommand{\eno}[1]{\rmm{e}^{#1}}
\newcommand{\lr}[3] { \left#1 #2 \right#3}
\newcommand{\hc}[1]{ {#1} ^{\dagger} }
\newcommand{\naraba}{\Rightarrow}
\newcommand{\com}[2]{\lr{[}{#1,#2}{]} }
\newcommand{\kitaiti}[1]{ \lr{\langle} {#1} {\rangle} }
\newcommand{\chuukakko}[1]{\lr{\{}{#1}{\}}}
\newcommand{\acom}[2]{\chuukakko{#1,#2}}
\newcommand{\migi}{\rightarrow}
\newcommand{\hf}{\frac{1}{2}}
\newcommand{\eps}{\epsilon}
\newcommand{\sx}{\sigma^{x}}
\newcommand{\sy}{\sigma^{y}}
\newcommand{\sz}{\sigma^{z}}
\newcommand{\im}{\rmm{i}}
\newcommand{\nt}{\notag\\}
\newcommand{\HC}{+\rmm{H.c.}}
\begin{document}
\title{Supersymmetry Breaking in a Generalized Nicolai Model with Fermion Pairing}
\author{Urei Miura} 
\email{urei.miura@yukawa.kyoto-u.ac.jp}
\affiliation{%
Division of Physics and Astronomy, Graduate School of Science, 
Kyoto University, Kyoto 606-8502, Japan}%
\affiliation{%
Center for Gravitational Physics and Quantum Information, 
Yukawa Institute for Theoretical Physics, Kyoto University, Kitashirakawa Oiwake-Cho, Kyoto 606-8502, Japan
}%
\author{Keisuke Totsuka}
\affiliation{%
Center for Gravitational Physics and Quantum Information, 
Yukawa Institute for Theoretical Physics, Kyoto University, Kitashirakawa Oiwake-Cho, Kyoto 606-8502, Japan
}%
\date{\today}
\begin{abstract}
We introduce a supersymmetric lattice fermion model that contains both fermion pairing and the interacting Nicolai model.  
This model possesses a single control parameter, $g$, introduced through the anticommutator of the supersymmetry generators 
(supercharge), and it is shown that supersymmetry is broken in both finite and infinite systems as long as $g$ is finite.  
Additionally, the single-mode approximation is employed to establish an upper bound on the dispersion relation of Nambu-Goldstone (NG) fermions, demonstrating their gaplessness.  
Finally, it is shown, through numerical and analytic calculations, that the anticipated extensive ground-state degeneracy and 
a zero-energy flat band does not occur for generic values of the parameter $g$.  
\end{abstract}
\maketitle

\section{INTRODUCTION}
Supersymmetry (SUSY) is a fermionic symmetry defined by the super-algebra of $Q$ (supercharge) and $H$ (Hamiltonian) as given in the following equation \cite{witten1981dynamical,witten1982constraints,cooper1995supersymmetry}
\begin{align}
\label{superalgebra}
    Q^2=\qty(\hc{Q})^2=0,\ \acom{Q}{\hc{Q}}=H.
\end{align}
Supersymmetry has been a topic of intensive research in theoretical physics due to its potential to unify the fundamental forces in nature.  
This is accomplished by introducing a new set of particles that are supersymmetric partners of those existing in the Standard Model of particle physics. 
The existence of these new particles is predicted by various theoretical models, although they have yet to be discovered experimentally.
Supersymmetry has been applied to random systems in statistical physics
\cite{efetov1999supersymmetry,cecotti1983stochastic,gozzi1993stochastic,parisi1979random}, and has been discussed also in lattice models \cite{nicolai1976supersymmetry,nicolai1977extensions,fendley2003lattice,fendley2005exact,huijse2008superfrustration,huijse2012supersymmetric,sannomiya2016supersymmetry,sannomiya2017supersymmetry,sannomiya2019supersymmetry,rahmani2015emergent}, cold-atom systems
\cite{yu2008supersymmetry,yu2010simulating,snoek2005ultracold,snoek2006theory,lozano20071+,shi2010supersymmetric,lai2015relaxation,blaizot2015spectral,blaizot2017goldstino,tajima2021goldstino}, and topological insulators 
\cite{grover2014emergent} in condensed matter physics.

Supersymmetry has significant implications for our understanding of the universe at both the smallest and largest length scales;  
at the smallest scale, it might help to solve the hierarchy problem by addressing the large discrepancy between the strength 
of gravity and the other fundamental forces of nature \cite{PhysRevD.13.974,PhysRevD.14.1667}.  
At the largest scale, on the other hand, it can provide a candidate for dark matter, which is believed to make up a substantial fraction of the matter in the universe.

Recently, supersymmetric lattice models \cite{fendley2003lattice,PhysRevLett.90.120402} have gained attention 
due to their possibility to exhibit extensive ground-state degeneracy called superfrustration \cite{PhysRevLett.95.046403,PhysRevLett.101.146406,huijse2008superfrustration,huijse2012supersymmetric} and supersymmetry breaking \cite{sannomiya2016supersymmetry,sannomiya2017supersymmetry,sannomiya2019supersymmetry}.  
In this paper, we focus on a supersymmetric model of interacting spinless fermions on a lattice that include a pairing term favoring superconducting order 
as well as various interactions.  As a control parameter is varied, the model interpolates between a trivial model in which 
supersymmetry is broken and the Nicolai model \cite{nicolai1976supersymmetry,nicolai1977extensions} where it remains unbroken.   
Therefore, we expect a SUSY-restoring phase transition to occur as a consequence of the competition among several parts of 
the Hamiltonian.    

The primary objective of this paper is to investigate the ground-state and low-energy properties of a lattice model 
of interacting spinless fermions that contain paring terms with both analytical and numerical methods. 
Our key findings include a rigorous proof, based on inequalities, of broken supersymmetry for any finite value of the parameter. 
Using the single-mode approximation, we demonstrate that supersymmetry breaking leads to the emergence 
of Nambu-Goldstone (NG) fermions (Goldstinos) with gapless linear dispersion. 

In our model, interactions appear only in the second order in the control parameter, and this suggests extensive degeneracy in the ground 
state and a zero-energy flat fermion band for sufficiently small values of the parameter.   We also investigate the fate of 
this extensive ground-state degeneracy associated with the zero-energy flat band to answer the question negatively.  

The organization of this paper is as follows. In Sec.~\ref{sec:Model}, we first introduce a supersymmetric lattice model  
of interacting spinless fermions that contain a single control parameter ($g$) 
through the anticommutator of the supercharges, and then describe various symmetries and conserved quantities of the model.  
In Sec.~\ref{sec:SUSYSSB}, it is shown that supersymmetry is broken both in finite and infinite systems as long as $g$ is finite.  
This is confirmed by numerical calculations of order parameters.  
Additionally, a variational approach is used to establish an upper bound on the dispersion relation of Nambu-Goldstone (NG) fermions, 
proving their gaplessness.  
In Sec.~\ref{sec:SF}, the possibility of a flat band, which is expected to arise for small enough $g$ due to the absence of $d$-fermions in $H_\rmm{pair}$, 
is discussed.
We first numerically investigate how the spectrum evolves as the parameter $g$ is increased, which suggests that 
the zero-energy flat band disappears for finite $g$. 
Then, it is demonstrated analytically that 
although the band remains flat at the lowest order in $g$, it acquires a finite dispersion in the second-order perturbation.

\section{MODEL}
\label{sec:Model}
In this section, we introduce a supersymmetric lattice model and describe its symmetry.  
An anticommuting (fermionic) operator $Q$ called supercharge plays a primary role in supersymmetric quantum systems; 
once we identify the operator $Q$, the Hamiltonian is defined by: 
$H =\acom{Q }{\hc{Q} },$ and all the non-trivial spectral properties follow immediately from 
the following fundamental relations
\[  
\begin{split}
& [ H , Q ] = [ H , \hc{Q} ]=0 \; , \\
& \{ (-1)^{F} , Q \}= \{ (-1)^{F} , \hc{Q} \} =0  
\end{split} 
\]   
with $(-1)^{F}$ being an operator called the fermion parity.  

One typical way of constructing supersymmetric lattice models is to introduce lattice fermions $f_i$ and bosons $b_i$ 
satisfying $\comm{f_i}{b_i}=\comm{f_i}{\hc{b}_i}=0$ and the standard (anti)commutation relations, 
and then define the supercharge as \cite{nicolai1976supersymmetry}
\begin{equation}
Q:= \sum_{i} \hc{b}_i f_i \; .
\label{eqn:Q-generic}
\end{equation}  
For instance, usual bosons are commonly used for $b_{i}$ to realize supersymmetry in Bose-Fermi mixtures 
\cite{yu2008supersymmetry,yu2010simulating,snoek2005ultracold,snoek2006theory,lozano20071+,%
shi2010supersymmetric,lai2015relaxation,blaizot2015spectral,blaizot2017goldstino,tajima2021goldstino}.  
If we replace the boson $b_{i}$ with other {\em bosonic} operators, we can devise a variety of supersymmetric models.

\subsection{Supercharge and Hamiltonian}
In this paper, we consider a system of spinless fermions on a one-dimensional lattice of length $2 L$.   
For reasons of clarity, we consider the zigzag geometry shown in Fig.~\ref{def-Q}(a) and divide the entire lattice into two sublattices.   
On these sublattices, we respectively define the spinless fermions $c_j$ and $d_j$ which obey the standard canonical anticommutation relations:
\begin{align}
& \acom{c_{i}}{\hcc_{j}}=\acom{d_{i}}{\hcd_{j}}=\delta_{i,j} \nt
&\acom{c_{i}}{c_{j}}=\acom{d_{i}}{d_{j}}=\acom{c_{i}}{d_{j}}=\acom{c_{i}}{\hcd_{j}}=0
\end{align}
for all $i,j=1,\ldots,L$. We denote the number operators for $c_j$ and $d_j$ by $n^{c}_j=c_j^\dagger c_j$ and $n^{d}_j=d_j^\dagger d_j$ 
( $j=1,\ldots,L$), respectively.  
 
In this paper, we take a fermion-pairing term $c^{\dagger}_{i}c^{\dagger}_{i+1}$ as the bosonic operator that is to be paired with the fermion $d_{i}$ 
and consider a model characterized by the following supercharge [see Fig.~\ref{def-Q}(a)]:
\begin{equation}
Q_{\rmm{S}} \qty(g) : = 
\sum_{i=1}^L \qty(1+g\, \hc{c}_{i}\hc{c}_{i+1})d_i   \; .
\label{eqn:def-super-charge}
\end{equation} 
The above supercharge $Q_{\rmm{S}} \qty(g)$ has a physical interpretation that it creates a real-space Cooper pair 
while annihilating a $d_i$ fermion nearby. 
The nilpotency $Q_{\rmm{S}} \qty(g)^2=0$ follows immediately from $d_{i}^{2}=0$. 
As we can change the sign of $g$ by a gauge transformation: $c_i\migi \eno{\im \frac{\pi}{2}}c_i$, we can set $g\geq0$ without loss of generality. 
In the limit $g\rightarrow \infty$, $Q_{\rmm{S}} \qty(g)/g$ reduces to the supercharge of the Nicolai model 
\cite{nicolai1976supersymmetry}  
\[
Q_{\rmm{Nic}} = \sum_{j=1}^L  c_{2j-1} c^{\dagger}_{2j} c_{2j+1}  
\]
after the redefinition of the fermion operators: $\hc{c}_{j} \to c_{2j - 1}$,  $d_{j} \to - \hc{c}_{2j}$.    

As has been mentioned above, the Hamiltonian is given by the anticommutator of the supercharges:
\begin{equation}
 H_{\rmm{S}} \qty(g) =\acom{Q_{\rmm{S}} \qty(g)}{\hc{Q}_{\rmm{S}} \qty(g)}=L+g H_{\rmm{pair}}+g^{2} H_{\rmm{Nic}}
 \label{BCS-like}
\end{equation}
with
\begin{subequations}
\begin{align}
H_{\rmm{pair}} =& \sum_{i=1}^L\qty(
\hc{c}_{i}\hc{c}_{i+1}+c_{i+1} c_i ) \; ,  \label{H-pair}  \\
\begin{split}
H_{\rmm{Nic}} = & \sum_{i=1}^L \qty(\hc{c}_{i}c_{i+2}\hc{d}_{i+1}d_{i}+\hc{c}_{i+2}c_{i}\hc{d}_{i}d_{i+1}) \\
& - \sum_{i=1}^L \qty(n^c_i+n^c_{i+1}-1)n^d_i + \sum_{i=1}^L n^c_i n^c_{i+1} \; .   
\end{split}
\label{H-nic} 
\end{align}
\end{subequations}
The fermion parity operator that anticommutes with $Q_{\rmm{S}}(g)$ and $\hc{Q}_{\rmm{S}} \qty(g)$ reads:
\begin{equation}
(-1)^{F} := (-1)^{\sum_{i}(n^c_i + n^{d}_i)} \; .
\end{equation}
We can readily check that it commutes with $H_{\rmm{S}} \qty(g)$ as is expected from the general construction 
of SUSY Hamiltonians.  

The bilinear part $H_{\text{pair}}$ may be viewed as a $p$-wave pairing term
of a one-dimensional spinless lattice superconductor.  
Like the BCS Hamiltonian, it explicitly breaks the charge U(1) symmetry, and as will be shown numerically in Sec.~\ref{sec:SUSYSSB}, 
the gap function $\Delta:=\kitaiti{c_ic_{i+1}}$ (with $\kitaiti{\cdots}$ standing for the ground-state expectation value) takes a non-zero value for a certain range of a coupling constant $g$. 
A similar term is known to appear in models such as the Kitaev chain 
\cite{kitaev2001unpaired} (or its spin counterpart \cite{lieb1961two}), 
where interesting phenomena occur by broken particle-number conservation.  

All the fermion interactions are contained in the Nicolai model $H_{\text{Nic}}$ \cite{nicolai1976supersymmetry};   
the first term of $H_{\text{Nic}}$ represents pair hopping of fermions separated by at most three sites [see the arrows in Fig.~\ref{def-Q}(a)], 
while the second and third terms respectively describe a repulsive interaction among $c_i$ fermions
and an attractive one between $c_i$ and $d_i$.

A remark is in order about the relation to other supersymmetric models 
studied recently \cite{sannomiya2016supersymmetry,sannomiya2017supersymmetry}.   
All these models and ours fall into a class of models that are characterized by supercharges of the form \eqref{eqn:Q-generic} 
with $\hc{b}_i f_i $ defined on three-site clusters.    

As has been mentioned above, we have chosen $f_i=d_i$ and $b^{\dagger}_{i} = 1+g\, c^{\dagger}_{i}c^{\dagger}_{i+1}$ 
in the model \eqref{BCS-like} considered here.  
On the other hand, $f_i=c_{2i-1}$ and a bond-density-wave operator $b^{\dagger}_i = g + c_{2i}^{\dagger} c_{2i+1}$ 
($f_i=c_{i}$ and $b_{i}^{\dagger}= g - c_{i-1}c_{i+1}$) are used in the extended Nicolai model \cite{sannomiya2016supersymmetry} 
(the $\mathbb{Z}_2$ Nicolai model \cite{sannomiya2017supersymmetry}).    
The three different constructions are illustrated in Figs.~\ref{def-Q}(a)-(c); the three-site clusters (i.e., the local units $b_{i}^{\dagger}f_{i}$ 
of the supercharges) and the bonds on which the bosonic operators $b_{i}^{\dagger}$ are defined are highlighted in 
cyan and red, respectively.   
In the model \eqref{BCS-like} and the extended Nicolai model [Fig.~\ref{def-Q}(b)], the three-site clusters are defined only on the upward triangles,
and hence the resulting models have two-sublattice structures.    
In the $\mathbb{Z}_2$-Nicolai model [Fig.~~\ref{def-Q}(c)], on the other hand, the local units live on 
both upward and downward triangles, making the model translationally invariant.

\begin{figure}[htbp]
\begin{center}
    \includegraphics[width=\columnwidth,clip]{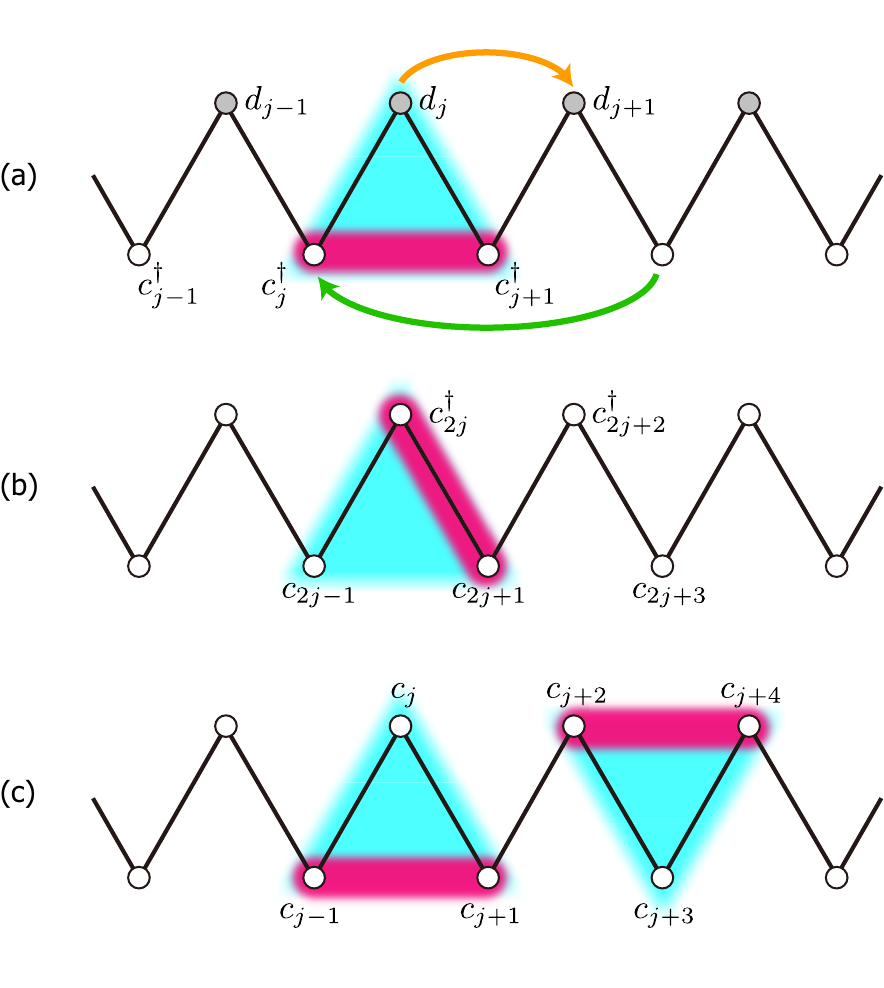}
\end{center}
\caption{Comparison of different supercharge constructions in known lattice fermion models: (a) the model \eqref{BCS-like} considered here, 
(b) the extended Nicolai model in Ref.~\cite{sannomiya2016supersymmetry}, 
and (c) the $\mathbb{Z}_2$-Nicolai model \cite{sannomiya2017supersymmetry}.  
In all these examples, supercharges consist of local terms shown by cyan triangles [living only on upward triangles in (a) and (b), and on both
upward and downward in (c)].  
Each local term is constructed by coupling a single fermion to bosonic operators living on particular bonds (highlighted in red).   
In (a) and (b), two-sublattice structures are automatically introduced by the construction, while all the sites are equivalent in (c).  
Pair-hopping process (arrows) in $H_{\rmm{Nic}}$ is also shown in (a).  
\label{def-Q}}
\end{figure}

\subsection{Symmetries and conserved quantities}
The Hamiltonian $H_{\rmm{S}} \qty(g)$ on a periodic lattice with $2L$ sites possesses the following symmetries.  
First of all, the nilpotency condition $Q_{\rmm{S}} \qty(g)^2=0$ immediately implies the supersymmetry of the model:
\begin{align}
    \com{H_{\rmm{S}} \qty(g)}{Q_{\rmm{S}} \qty(g)}=\com{H_{\rmm{S}} \qty(g)}{\hc{Q}_{\rmm{S}}\qty(g)}=0.
\end{align}
Also, from the construction \eqref{BCS-like} of the Hamiltonian $H_{\rmm{S}} \qty(g)$,
the energies are non-negative, and a zero-energy ground state $\ket{0}_0$ (throughout this paper, we use the symbol $\ket{0}$  
for ground states and the suffix $0$ to indicate that the state has zero energy) satisfying $H_{\rmm{S}} \qty(g)\ket{0}_0=0$, 
if it exists, must be a singlet that is annihilated by the supercharge: $Q_{\rmm{S}} \qty(g) \ket{0}_0=\hc{Q}_{\rmm{S}} \qty(g) \ket{0}_0=0$, and vice versa.  
All the other states with (strictly) positive energies appear as doublets \cite{witten1982constraints}.

 The usual global $U(1)_{c}$ symmetry that acts on $c_i$ as $c_i\rightarrow \eno{\im \theta}c_i \qty(\theta \in \mathbb{R})$ does {\em not} exist 
 except at $g=0$ where the model trivializes and $g=\infty$ where the system reduces to the Nicolai model.
 Instead, the system conserves the fermion parity of $c_i$, which is defined by $\qty(-1)^{N_c}=\qty(-1)^{\sum_i n_i^c}$ 
 and generates $\mathbb{Z}_{2}$ transformation: $\qty(-1)^{N_c} c_{i} \qty(-1)^{N_c}= -c_{i}$.  
 The Hamiltonian satisfies the condition $\com{H_{\rmm{S}} \qty(g)}{(-1)^{N_c}}=0$.  
 Moreover, the particular form of the pairing term $H_{\rmm{pair}}$ leads to the conservation of the following quantity:
 \begin{equation}
 \sum_{i} (-1)^{i} n_{i}^{c} \; ,
 \end{equation}
 which generates ``staggered'' gauge transformation $c_i\rightarrow \eno{(-1)^{i} \im \theta}c_i$.  
 Note that this is special to the nearest-neighbor pairing.

On the other hand, since the supercharge $Q_{\rmm{S}} \qty(g)$ [$Q^{\dagger}_{\rmm{S}} \qty(g)$] annihilates (creates) 
 precisely one $d$ fermion, the Hamiltonian $H_{\rmm{S}} \qty(g) =\acom{Q_{\rmm{S}} \qty(g)}{\hc{Q}_{\rmm{S}} \qty(g)}$ that contains only 
 the combinations $d_{i} d_{j}^{\dagger}$ and $d_{i}^{\dagger}d_{j}$  
 preserves the number of $d$ fermions and has $U(1)_d$ symmetry: $d_i\rightarrow \eno{i\theta}d_i\qty(\theta \in \mathbb{R})$.  

On a periodic lattice, the model (\ref{BCS-like}) has  the translation symmetry $\mathcal{T}$ that simultaneously shifts $c_i$ and $d_i$ by one site 
$c_i \xrightarrow{\mathcal{T}} c_{i+1}$ and $d_i \xrightarrow{\mathcal{T}} d_{i+1}$ is present.
 Furthermore, the system has an inversion symmetry $\mathcal{I}$ that acts as 
$c_i \xrightarrow{\mathcal{I}} \pm i c_{L+1-i \pmod L}$ and 
$d_i \xrightarrow{\mathcal{I}}  d_{L-i \pmod L}$.

\section{SPONTANEOUS SUPERSYMMETRY BREAKING}
\label{sec:SUSYSSB}
In this section, we prove the existence of spontaneous symmetry breaking (SSB)
of supersymmetry (SUSY) in the supersymmetric lattice model \eqref{BCS-like} and determine the critical value $g_{\text{c}}$ below which 
SUSY is broken.  
As the situation is subtle in the thermodynamic limit, we need to be careful in judging whether or not SUSY is broken in infinite-size systems.  
\subsection{Definition of the spontaneous SUSY breaking}
\label{sec:def-SUSY-SSB}
As has been emphasized in Ref.~\cite{witten1981dynamical}, the spontaneous breaking of SUSY is rather different from that of ordinary 
global symmetry.  
In fact, for {\em any} finite system size $L$ and finite coupling $g$, one can easily show that SUSY is always broken spontaneously in the ground state
of \eqref{BCS-like} in the sense that the ground-state energy is strictly positive and that all eigenstates are SUSY doublets 
\cite{witten1982constraints,cooper1995supersymmetry}.   

To show this, we consider the following local operator:
\begin{equation}
\chi_j\qty(g) := \left(1 - g \, \hc{c}_{j}\hc{c}_{j+1}\right) \hc{d}_{j} 
\label{eqn:def-conj-O}
\end{equation}
which satisfies 
\begin{equation}
\left\{Q_{\rmm{S}} \qty(g), \chi_j\qty(g)\right\} = 1
\label{eqn:acomm-Q-O}
\end{equation}
for any {\em finite} $g$.  
Such a local operator, if exists, has a significant meaning to the spontaneous breaking of SUSY.    
Let us first suppose that there exists a zero-energy ground state $\ket{0}_{0}$ satisfying 
$Q_{\rmm{S}} \qty(g) \ket{0}_{0} = \hc{Q}_{\rmm{S}} \qty(g) \ket{0}_{0} = 0$.  
Then, the following anticommutator must vanish:
\begin{equation}
{}_{0} \! \bra{0}\left\{Q_{\rmm{S}} \qty(g), \chi_j\qty(g)\right\} \ket{0}_{0} = 0\; .
\label{eqn:GS-average-acomm-1}
\end{equation}
On the other hand, from \eqref{eqn:acomm-Q-O}, we obtain ${}_{0}\!\bra{0} \left\{Q_{\rmm{S}} \qty(g), \chi_j\qty(g)\right\} \ket{0}_{0} = 1$ 
which contradicts with \eqref{eqn:GS-average-acomm-1}.   
Therefore, there is no zero-energy ground state and SUSY SSB always occurs in $H_{\rmm{S}} \qty(g)$ as long as $g$ and $L$ are finite.  

Note that the relation \eqref{eqn:acomm-Q-O} does not hold for $g = \infty$, 
where we must use $\lim_{g \to \infty}Q_{\rmm{S}} \qty(g)/g$ and $\lim_{g \to \infty} \chi_j\qty(g) /g$ instead,  
and hence Eq.~\eqref{eqn:acomm-Q-O} reads: $\lim_{g\to \infty} \left\{Q_{\rmm{S}} \qty(g)/g , \chi_j \qty(g)/g \right\} = 0$.   
Now it does not lead to any contradiction to \eqref{eqn:GS-average-acomm-1} and the above strategy no longer works in the limit $g \to \infty$.     
In fact, it is known \cite{moriya2018ergodicity,Katsura_2020} 
that SUSY remains unbroken in the Nicolai model with $\lim_{g \to \infty} H_{\rmm{S}} \qty(g)/g^{2}$.

In the above, we have seen that, except at $g=\infty$, SUSY is always broken in the ground state of $H_{\rmm{S}}(g)$ for any finite 
system size $L$.   
However, the situation is different in infinite-size systems where the ground-state energy itself may not be well-defined.
In fact, showing that $E_{\text{g.s.}}(g;L) >0$ for any finite $L$ is {\em not} sufficient to conclude that SUSY is broken in the limit $L \to \infty$ 
\cite{witten1982constraints}, 
and it is suggested that we should use an alternative definition of SUSY SSB in terms of the ground-state energy {\em density}.  
More concretely, we say that SUSY is spontaneously broken in a system of length $L$ (including the thermodynamic limit $L \to \infty$)  
when the ground-state energy $E_{\text{g.s.}}(g;L)$ {\em per site} satisfies 
\cite{sannomiya2016supersymmetry,sannomiya2017supersymmetry,sannomiya2019supersymmetry}:
\begin{equation}
\frac{E_{\text{g.s.}}(g;L)}{L}
\gneq 0   \; .
\label{eqn:SSB-criterion-gen}
\end{equation}
For any finite $L$, this reduces to the usual criterion $E_{\text{g.s.}}(g;L) \gneq 0$ \cite{witten1982constraints} 
and, in the infinite-size limit $L \to \infty$, this is equivalent to strict positivity of the ground-state energy density:
\begin{equation}
e_{\text{g.s.}}(g) := 
\lim_{L\rightarrow \infty} \frac{E_{\text{g.s.}}(g;L)}{L} \gneq 0   \; .  
\label{eqn:SSB-criterion}
\end{equation}
According to the definition \eqref{eqn:SSB-criterion-gen}, for SUSY to be broken even in the infinite-size limit,  
the positive ground state energy $E_{\text{g.s.}}(g;L)$ must diverge as $L$ or faster.  
In what follows, we will consider infinite-size systems unless otherwise stated and use Eq.~\eqref{eqn:SSB-criterion} to judge SUSY SSB.

Given the above criterion for SUSY SSB, we can immediately conclude that SUSY is broken at least when $g=0$ since
$e_{\text{g.s.}}(g=0) = H_{\rmm{S}} \qty(g=0)/L=1$ there.  
Then, the next question is whether SUSY remains broken even for finite $g$ or not.  
When we increase $g$ from zero, the pairing fluctuations in $g H_{\rmm{pair}}$ tend to lower the ground-state energy density $e_{\text{g.s.}}(g)$ 
as $- 2g/\pi$ [see Eq.~\eqref{eqn:Egs-density}].  
Therefore, if we ignore the $g^{2}$ correction from $g^{2} H_{\rmm{Nic}}$, $e_{\text{g.s.}}(g)$ stays positive when $g$ is small enough and finally reaches zero at $g=\pi/2$, above which SUSY is expected to be restored. 
However, in Sec.~\ref{sec:proof-for-infinite-L}, we will show that, as in the case of finite systems, the presence of operator $\chi_i \qty(g)$ 
\eqref{eqn:def-conj-O} leads to SUSY SSB for any finite values of $g$.

\subsection{Superconducting order parameter and SUSY breaking}
\label{Superconducting order parameter and SUSY breaking}
Before proceeding to the discussion of infinite-size systems, let us consider SUSY breaking from the order-parameter 
point of view.  To this end, we consider the following bosonic  quantities, defined through infinitesimal supersymmetric transformations 
of fermionic operators \cite{wess1992supersymmetry}:
\begin{subequations}
\begin{align}
F_i: &=\acom{Q_{\text{S}}(g)}{c_i}
=g \qty(
\hc{c}_{i+1}d_{i}-\hc{c}_{i-1}d_{i-1} ),    
\label{eqn:OP-F-by-fermions}    \\
G_i:&=\acom{\hc{Q}_{\text{S}}(g)}{d_i}
=1 - g\,  c_i c_{i+1}.
\label{eqn:OP-G-by-fermions}
\end{align}
\end{subequations}
That these work as the order parameters can be understood as follows. 
First of all, in a phase where SUSY is unbroken and a zero-energy ground state $\ket{0}_{0}$ exists, 
${}_{0} \! \bra{0} F_i \ket{0}_{0} = {}_{0} \! \bra{0} G_i \ket{0}_{0} = 0$ follows automatically from 
$Q_{\rmm{S}} \qty(g) \ket{0}_{0} = \hc{Q}_{\rmm{S}} \qty(g) \ket{0}_{0} = 0$.   
This implies that if at least one of ${}_{0} \! \bra{0} F_i \ket{0}_{0}$ and ${}_{0} \! \bra{0} G_i \ket{0}_{0}$ takes a  non-zero value, SUSY must be broken. 
These order parameters can be introduced systematically through the use of superfield formalism: see Appendix \ref{App:chobakeisiki} for more details.  

The order parameter $\bra{0} F_i \ket{0}$ \eqref{eqn:OP-F-by-fermions} has a physical interpretation as a current-like quantity 
flowing along the zigzag direction (see Fig.~\ref{def-Q}), 
and it changes the sign under the inversion symmetry $\mathcal{I}$.  
Numerical calculations 
of the expectation value of $F_i$ yielded values close to zero, indicating that the ground state 
does not break the inversion symmetry $\mathcal{I}$.  

Next we calculate the order parameter $\bra{0} G_i \ket{0}$ for the model $H_{\rmm{S}} \qty(g)$.  
In the SUSY-unbroken phase $g > g_{\text{c}}$, if it exists, ${}_{0}\! \bra{0} G_i \ket{0}_{0} = 0$ and 
the superconducting order parameter $\Delta(g):=\langle 0 | c_i c_{i+1}|0 \rangle$ should be given {\em exactly} by $\Delta(g)=\frac{1}{g}$; 
any deviation from this value, no matter how small it is, implies the broken SUSY in finite size systems.   
To calculate $\bra{0} G_i \ket{0}$ in the SUSY-broken phase ($g < g_{\text{c}}$), we consider an extreme case with $g\ll 1$ 
in which the ground state is given approximately by that of $H_{\rmm{pair}}$.   
Then, with the method used in Sec.~\ref{sec:1st-order-in-Hnic}, 
the superconducting order parameter and $\bra{0} G_i \ket{0}$ are easily calculated as:
\begin{equation}
\begin{split}
&    \Delta(g) ={}_{c} \langle 0 |\hc{U}_c c_ic_{i+1}U_c|0 \rangle_{c} =\frac{1}{L}\sum_{k>0}\sin(k)    \\
& \bra{0} G_i \ket{0} = 1- g \Delta(g)   \; .
\end{split}
\label{eqn:pairing-vs-SUSY-G}
\end{equation}
In particular, this reduces to $\Delta(g)=1/\pi$ in the infinite-size limit $L\rightarrow\infty$.  

To obtain the value of $\Delta(g)$ for generic $g$, we numerically calculated it for a finite system with $L=6$ (12 sites) 
under periodic boundary conditions (see Fig.~\ref{gap-func-Numerical}).  
We observed that, when $g \approx 0$, it agrees with the $g \ll 1$ value $\frac{1}{L}\sum_{k>0}\sin(k)=\frac{\sqrt{3}}{6}=0.288675\dots$ 
and approaches $1/g$ from below.  
In the inset of Fig.~\ref{gap-func-Numerical}, the difference $1/g - \Delta(g)=\bra{0} G_i \ket{0}/g$ is plotted which, according to the general argument 
presented above, signals SUSY breaking (i.e., any finite values of it imply broken SUSY).   
The data were well fitted by $1/g - \Delta(g)=\alpha + \beta/g^{\gamma}$ with $\alpha=-1.35\times 10^{-7}$, $\beta=6.74$, and $\gamma=2.98 (\approx 3)$ 
(see the red dashed line in the inset) \footnote{%
This is natural from the symmetry of $\Delta(g)$. As the sign change $g \to -g$ is compensated by $c_i\migi \eno{\im \frac{\pi}{2}}c_i$, the superconducting order parameter is an odd function of $g$: $\Delta(-g)=- \Delta(g)$. Therefore, if $\Delta(g)$ allows a $1/g$-expansion, $\alpha=0$ and it must read as: $\Delta(g)=1/g + \beta/g^{3} + \cdots$.}.   
If we neglect the very tiny offset $\alpha$, this is consistent with the observation in Sec.~\ref{sec:def-SUSY-SSB} that SUSY is broken 
for any finite $g$ (i.e., $g_{\text{c}}=\infty$) when the system size $L$ is finite.
Therefore, $\Delta(g)$ just approaches $1/g$ from below but they never intersect (if they do at some $g$, it means that SUSY is restored there).   

\begin{figure}[htbp]
\includegraphics[width=\columnwidth,clip]{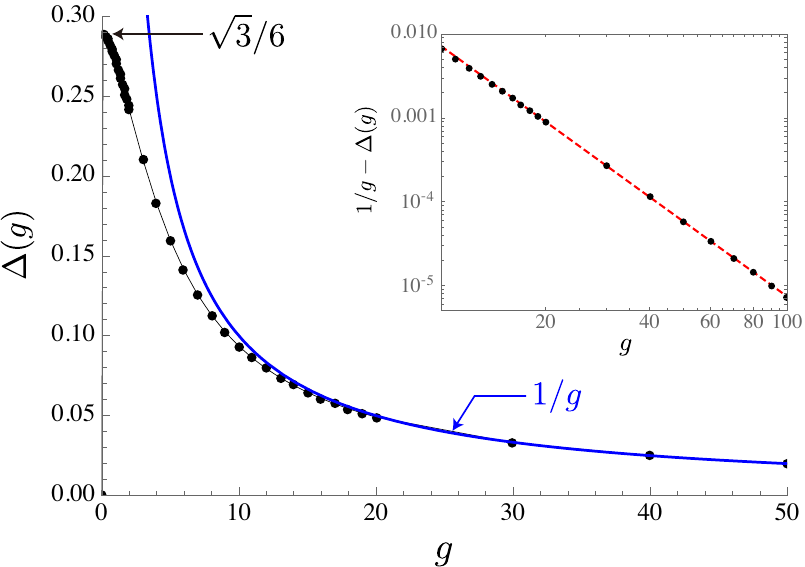}
\caption{The $g$-dependence of the gap function $\Delta(g):=\langle 0 | c_i c_{i+1}|0 \rangle$ (black dots) obtained 
by numerical diagonalization of a finite-size system ($L=6$, i.e., 12 sites).  The order parameter $\bra{0} G_i \ket{0}/g = 1/g - \Delta(g)$ is shown for large-$g$ in the inset (in log-log scale). Note that for large $g$ (e.g., $g \gtrsim 30$), $\Delta(g)$ behaves like $- \alpha + 1/g - \beta/g^{\gamma}$ 
with $\alpha=-1.35\times 10^{-7}$, $\beta=6.74$, and $\gamma=2.98 (\approx 3)$ (red dashied line). 
\label{gap-func-Numerical}
}
\end{figure}

\subsection{SUSY breaking in infinite-size systems}
\label{sec:proof-for-infinite-L}
While we have seen in the previous subsections that SUSY is always broken except at the point $g=\infty$ (the Nicolai model) 
as far as the system size is finite, the situation is subtle in infinite-size systems.  
In this respect, a thermodynamic-limit statement has been obtained \cite{moriya2018supersymmetry} for the extended Nicolai model introduced 
in Ref.~\cite{sannomiya2016supersymmetry}.  
Following Ref.~\cite{moriya2018supersymmetry}, we will prove in this subsection that the transition point of $H_{\rmm{S}}(g)$ 
(in the thermodynamic limit) 
is in fact $g_\rmm{c}=\infty$, i.e., SUSY remains broken in the sense of Eq.~\eqref{eqn:SSB-criterion} all the way up to $g=\infty$ 
\footnote{%
In Ref.~\cite{moriya2018supersymmetry}, another definition of SUSY breaking is adopted to prove broken SUSY in the infinite-size limit.  
Here we use the strategy used to prove the positivity of the energy density.}.   

Let us denote the normalized ground state of the system and its energy by $\ket{0}$ and $E_{\text{g.s.}}$, 
respectively.  Then, the following lemma holds. 
\begin{lem}
If we assume that SUSY breaking does not occur in the sense of Eq.~\eqref{eqn:SSB-criterion}, 
then both $\norm{Q\ket{0}}/\sqrt{L}$ and $\norm{\hc{Q}\ket{0}}/\sqrt{L}$ approach zero as $L\to \infty$, i.e., 
\begin{equation}
\label{Qorder}
\norm{Q\ket{0}}  \, , \;  \norm{\hc{Q}\ket{0}} = o\qty(\sqrt{L}) \; .
\end{equation}  
\label{thm:lemma-1}
\end{lem}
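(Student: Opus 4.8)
The plan is to exploit the defining super-algebra relation $H_{\rmm{S}}\qty(g)=\acom{Q_{\rmm{S}}\qty(g)}{\hc{Q}_{\rmm{S}}\qty(g)}$ directly. First, for each finite $L$, let $\ket{0}$ be a normalized ground state of $H_{\rmm{S}}\qty(g)$ on the $2L$-site lattice and take its expectation value in $H_{\rmm{S}}\qty(g)=Q_{\rmm{S}}\qty(g)\hc{Q}_{\rmm{S}}\qty(g)+\hc{Q}_{\rmm{S}}\qty(g)Q_{\rmm{S}}\qty(g)$. Using $\bra{0}Q\hc{Q}\ket{0}=\norm{\hc{Q}\ket{0}}^{2}$ and $\bra{0}\hc{Q}Q\ket{0}=\norm{Q\ket{0}}^{2}$, one obtains the exact identity
\[
  E_{\text{g.s.}}(g;L)=\norm{Q_{\rmm{S}}\qty(g)\ket{0}}^{2}+\norm{\hc{Q}_{\rmm{S}}\qty(g)\ket{0}}^{2},
\]
valid for every finite $L$. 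This is the only nontrivial input; the remainder is an elementary estimate.

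Next, I would divide this identity by $L$. Since the two terms on the right-hand side are both nonnegative, each of $\norm{Q\ket{0}}^{2}/L$ and $\norm{\hc{Q}\ket{0}}^{2}/L$ is bounded above by $E_{\text{g.s.}}(g;L)/L$. Under the assumption that SUSY is \emph{not} broken in the sense of Eq.~\eqref{eqn:SSB-criterion}, the right-hand side tends to $e_{\text{g.s.}}(g)=0$ as $L\to\infty$, so a squeeze argument forces $\norm{Q\ket{0}}^{2}/L\to0$ and $\norm{\hc{Q}\ket{0}}^{2}/L\to0$. Taking square roots gives $\norm{Q\ket{0}}/\sqrt{L}\to0$ and $\norm{\hc{Q}\ket{0}}/\sqrt{L}\to0$, which is exactly the $o\qty(\sqrt{L})$ bound asserted in Eq.~\eqref{Qorder}.

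I do not anticipate a genuine obstacle: the statement is essentially a repackaging of the positivity structure $H_{\rmm{S}}\qty(g)=\acom{Q_{\rmm{S}}\qty(g)}{\hc{Q}_{\rmm{S}}\qty(g)}\geq0$. The only points that require a little care are (i) reading the hypothesis as a statement about a sequence of finite-size ground states $\{\ket{0}_{L}\}$, so that $E_{\text{g.s.}}(g;L)$ and the two norms are all well-defined finite quantities before the limit $L\to\infty$ is taken; and (ii) noting that the hypothesis enters only through the vanishing of the energy density, so the conclusion is insensitive to any non-uniqueness of the minimizer. It is also worth remarking that the contrapositive of this estimate --- a nonvanishing lower bound on $\norm{Q\ket{0}}^{2}/L$ would force $e_{\text{g.s.}}(g)\gneq0$ and hence SUSY breaking --- is precisely the mechanism behind the infinite-volume argument that follows, built on the local operator $\chi_{i}\qty(g)$ of Eq.~\eqref{eqn:def-conj-O}.
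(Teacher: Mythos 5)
Your proposal is correct and follows essentially the same route as the paper: both use the identity $E_{\text{g.s.}} = \bra{0}\acom{Q}{\hc{Q}}\ket{0} = \norm{Q\ket{0}}^{2}+\norm{\hc{Q}\ket{0}}^{2}$, divide by $L$, invoke the unbroken-SUSY hypothesis to send the energy density to zero, and conclude each nonnegative term is $o(L)$ before taking square roots. No gaps; nothing further is needed.
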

\begin{proof}
When SUSY is not broken, Eq.~\eqref{eqn:SSB-criterion} implies:
\begin{align}
\frac{E_{\text{g.s.}}}{L}=
\frac{
\bra{0}\acom{Q}{\hc{Q}}\ket{0}
}{L} \nonumber
=\frac{\norm{Q\ket{0}}^2}{L}+\frac{\norm{\hc{Q}\ket{0}}^2}{L} \rightarrow 0  \; .
\end{align}
Since each term is positive, we immediately see:
\begin{align}
&\norm{Q\ket{0}}^2\, ,  \; \norm{\hc{Q}\ket{0}}^2=o\qty(L)
\end{align}
which, after taking square roots, gives the desired results.  
\end{proof}

In proving SUSY-breaking for {\em finite} $L$, the operator $\chi_{i}(g)$ defined by Eq.~\eqref{eqn:def-conj-O} and 
the anticommutation relation \eqref{eqn:acomm-Q-O} have played a crucial role.   
In the same spirit, we define the following operator $X \qty(g)$:
\begin{equation}
X \qty(g) := \frac{1}{L}\sum_{i=1}^{L} \chi_i\qty(g), 
\label{eqn:def-X}
\end{equation}
which satisfies 
\begin{equation}
\label{conjugateQX}
\acom{Q_{\rmm{S}}\qty(g)} {X \qty(g)}=1 
\end{equation}
{\em provided that $g < \infty$}.   
Below, we assume $g < \infty$ and suppress the $g$-dependence of $X(g)$ for simplicity of the notation.   

Then, we can show that the operators $X$ and $X^{\dagger}$ have the following system-size dependence due to the short-range nature of 
$\chi_{i}(g)$.  

\begin{lem}
    $\norm{X \ket{0}},\norm{\hc{X}\ket{0}}=O\qty(1/\sqrt{L})$.
    \label{thm:lemma-2}
\end{lem}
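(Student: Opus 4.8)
The plan is to exploit the strict locality of the operators $\chi_i(g)$ to show that the anticommutator $\acom{X}{\hc{X}}$, viewed as an operator, has operator norm $O(1/L)$, and then to take its ground-state expectation value. Note that this argument is purely ``kinematic'': it will not use the hypothesis of Lemma~\ref{thm:lemma-1} at all, only the boundedness and finite range of $\chi_i(g)$.

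First I would record two elementary facts. (i) From \eqref{eqn:def-conj-O}, $\chi_i(g) = \qty(1 - g\,\hcc_i\hcc_{i+1})\hc{d}_i$ acts nontrivially only on the three fermion modes $d_i, c_i, c_{i+1}$, and its operator norm is bounded \emph{uniformly} in $L$: since every single-fermion creation/annihilation operator has unit operator norm, $\norm{\chi_i(g)}_{\rmm{op}} \le \norm{\hc{d}_i}_{\rmm{op}}\qty(1 + g\,\norm{\hcc_i\hcc_{i+1}}_{\rmm{op}}) \le 1+g$. (ii) Both $\chi_i(g)$ and $\hc{\chi}_j(g)$ are odd (fermionic) operators, and two odd operators built from disjoint sets of modes anticommute; hence $\acom{\chi_i(g)}{\hc{\chi}_j(g)} = 0$ identically whenever $\chuukakko{i,i+1}\cap\chuukakko{j,j+1} = \emptyset$ and $i\ne j$ — on a ring of $L$ sites this fails only for $i=j$ and $\abs{i-j}=1 \pmod{L}$.

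Then I would combine the two facts. Expanding $\acom{X}{\hc{X}} = \frac{1}{L^2}\sum_{i,j=1}^{L}\acom{\chi_i(g)}{\hc{\chi}_j(g)}$ and dropping the identically vanishing terms from (ii) leaves
\begin{equation}
\acom{X}{\hc{X}} = \frac{1}{L^2}\sum_{\abs{i-j}\le 1}\acom{\chi_i(g)}{\hc{\chi}_j(g)} \;,
\end{equation}
a sum of only $3L$ terms, each of which obeys $\norm{\acom{\chi_i(g)}{\hc{\chi}_j(g)}}_{\rmm{op}} \le 2\norm{\chi_i(g)}_{\rmm{op}}\norm{\chi_j(g)}_{\rmm{op}} \le 2(1+g)^2$ by fact (i). Hence $\norm{\acom{X}{\hc{X}}}_{\rmm{op}} \le 6(1+g)^2/L = O(1/L)$. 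To finish, for the normalized ground state $\ket{0}$ one has
\begin{equation}
\norm{X\ket{0}}^2 + \norm{\hc{X}\ket{0}}^2 = \bra{0}\acom{X}{\hc{X}}\ket{0} \le \norm{\acom{X}{\hc{X}}}_{\rmm{op}} = O(1/L)\;,
\end{equation}
and since the two terms on the left are non-negative, each is $O(1/L)$; taking square roots yields $\norm{X\ket{0}},\norm{\hc{X}\ket{0}} = O(1/\sqrt{L})$.

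I do not expect a genuine obstacle here; the points that require some care are making the operator-norm bound on $\chi_i(g)$ explicitly independent of $L$ (so that the $1/L^2$ prefactor beats the $O(L)$ number of surviving terms) and the bookkeeping of which index pairs wrap around under periodic boundary conditions. It is worth remarking that, because the estimate is kinematic, the same $O(1/\sqrt{L})$ bound in fact holds for \emph{any} normalized state; what makes it useful is that, combined with $\acom{Q_{\rmm{S}}\qty(g)}{X\qty(g)}=1$ from \eqref{conjugateQX} and Lemma~\ref{thm:lemma-1}, it will force a contradiction with the assumption that SUSY is unbroken.
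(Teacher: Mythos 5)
Your proposal is correct and rests on the same decomposition as the paper's proof: both expand $\acom{X}{\hc{X}}=L^{-2}\sum_{i,j}\acom{\chi_i\qty(g)}{\hc{\chi}_j\qty(g)}$, use that the anticommutator vanishes identically for $\abs{i-j}\geq 2$ so that only $O\qty(L)$ pairs survive, deduce $\bra{0}\acom{X}{\hc{X}}\ket{0}=O\qty(1/L)$, and take square roots. Where you differ is in how the surviving terms are controlled: the paper bounds the ground-state expectation of the sum by invoking a translationally invariant $\ket{0}$, so that the quantity inside the square root is "of order $L$," whereas you bound each surviving anticommutator in operator norm by $2(1+g)^2$ uniformly in $L$, using only that $\chi_i\qty(g)$ is a bounded, strictly local, odd operator. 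Your estimate is therefore purely kinematic and state-independent: it applies to any normalized state, makes the $g$-dependence of the constant explicit, and dispenses with the translational-invariance assumption (which could be delicate if the finite-size ground state is degenerate and a non-symmetric ground state is chosen). The paper's version buys only brevity; yours is the more robust formulation of the same locality argument, and it feeds into the Theorem's contradiction with Eq.~\eqref{conjugateQX} and Lemma~\ref{thm:lemma-1} exactly as the paper's does.
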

\begin{proof}
First, we note the following inequality 
    \begin{equation}
\label{chiineq}
\begin{split}
    \norm{X \ket{0}} &\leq 
    \sqrt{\norm{X \ket{0}}^2+\norm{\hc{X}\ket{0}}^2}  \\
    &=\sqrt{\bra{0}\acom{X}{\hc{X}}\ket{0}} \\
    &=\frac{1}{L}\sqrt{
    \sum_{i,j}\bra{0}\acom{\chi_i\qty(g)}{\hc{\chi}_j\qty(g)}\ket{0}
    }
\end{split}
\end{equation}
that can be readily derived using the definition \eqref{eqn:def-X} of $X$.  
A similar inequality holds for $\norm{\hc{X}\ket{0}}$ as well.  

To evaluate the right-hand side, we note $\acom{\chi_i\qty(g)}{\hc{\chi}_j\qty(g)}=0$ for $\abs{i-j}\geq 2$.  
Then, assuming translationally invariant $\ket{0}$, the quantity inside the square root is of order $L$ and we obtain:
\begin{equation}
\label{chichiineq}
    \frac{1}{L}\sqrt{
    \sum_{i,j}\bra{0}\acom{\chi_i\qty(g)}{\hc{\chi}_j\qty(g)}\ket{0}
    }
=O\qty(1/\sqrt{L})  \; .
\end{equation}
Therefore, $\norm{X \ket{0}}$ and $\norm{\hc{X}\ket{0}}$ behave like:
\begin{equation}
    \norm{X \ket{0}}\, , \;   \norm{\hc{X}\ket{0}} = O\qty(1/\sqrt{L})  \; .
\end{equation}
\end{proof}

Combining these two Lemmas with Eq.~\eqref{conjugateQX}, we finally arrive at the following statement. 
\begin{thm}
    Supersymmetry is broken in the thermodynamic limit for any $g<\infty$, i.e., $g_\rmm{c}=\infty$.
\end{thm}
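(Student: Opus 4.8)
The plan is a proof by contradiction that simply assembles the two lemmas and the anticommutation relation \eqref{conjugateQX}. Write $Q\equiv Q_{\rmm{S}}(g)$ and $X\equiv X(g)$, and suppose, contrary to the claim, that SUSY is \emph{not} spontaneously broken in the sense of Eq.~\eqref{eqn:SSB-criterion}, i.e.\ $e_{\text{g.s.}}(g)=0$. Fix a finite $g$ and let $\ket{0}$ be a normalized (and, as assumed in Lemma~\ref{thm:lemma-2}, translationally invariant) finite-$L$ ground state. Then Lemma~\ref{thm:lemma-1} gives $\norm{Q\ket{0}},\norm{\hc{Q}\ket{0}}=o(\sqrt{L})$, while Lemma~\ref{thm:lemma-2} gives $\norm{X\ket{0}},\norm{\hc{X}\ket{0}}=O(1/\sqrt{L})$.

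Next I would take the ground-state expectation value of both sides of \eqref{conjugateQX} — which is permitted precisely because $g<\infty$ — obtaining $\bra{0}\acom{Q}{X}\ket{0}=1$ for every finite $L$. Splitting the anticommutator as $\bra{0}QX\ket{0}+\bra{0}XQ\ket{0}$, rewriting $\bra{0}QX\ket{0}$ as the overlap of $\hc{Q}\ket{0}$ with $X\ket{0}$ and $\bra{0}XQ\ket{0}$ as the overlap of $\hc{X}\ket{0}$ with $Q\ket{0}$, and applying the Cauchy--Schwarz inequality to each while invoking Lemmas~\ref{thm:lemma-1} and~\ref{thm:lemma-2} in the final step, one finds
\begin{equation}
\begin{split}
1 &= \abs{\bra{0}\acom{Q}{X}\ket{0}} \\
&\le \norm{\hc{Q}\ket{0}}\,\norm{X\ket{0}}+\norm{\hc{X}\ket{0}}\,\norm{Q\ket{0}} \\
&= o(\sqrt{L})\cdot O(1/\sqrt{L}) = o(1).
\end{split}
\end{equation}
Letting $L\to\infty$ forces the right-hand side to zero, contradicting the constant value $1$ on the left. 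Hence the assumption is untenable: $e_{\text{g.s.}}(g)>0$, so SUSY is broken for every finite $g$ and $g_{\rmm{c}}=\infty$.

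Since the real work is already contained in the two lemmas, there is no serious obstacle here; what needs care is mostly bookkeeping. First, all the estimates must be used as finite-$L$ bounds, so that the chain of inequalities holds at each $L$ and one passes to the limit only at the end — this avoids having to construct an infinite-volume ground state directly. Second, one must respect the translation-invariance hypothesis entering Lemma~\ref{thm:lemma-2}: if the finite-size ground space is degenerate, one should pick a $\mathcal{T}$-invariant representative (or average $\acom{\chi_i(g)}{\hc{\chi}_j(g)}$ over the degenerate sector) before invoking that estimate. Finally — and this is the conceptually essential point — relation \eqref{conjugateQX} is used only for $g<\infty$; at $g=\infty$ one must rescale by $g$, the corresponding anticommutator vanishes, the contradiction disappears, and SUSY is indeed restored (the Nicolai model), which is exactly why the critical value is $g_{\rmm{c}}=\infty$ rather than something finite.
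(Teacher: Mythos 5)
Your proposal is correct and follows essentially the same route as the paper: it combines the anticommutation relation \eqref{conjugateQX} with the Cauchy--Schwarz inequality and the size estimates of Lemmas~\ref{thm:lemma-1} and~\ref{thm:lemma-2} to derive the contradiction $1 \le o(\sqrt{L})\cdot O(1/\sqrt{L}) \to 0$. Your added remarks on working at finite $L$, on picking a translation-invariant ground state, and on why the argument fails at $g=\infty$ are sensible clarifications consistent with the paper's discussion.
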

\begin{proof}
From \eqref{conjugateQX} and the Cauchy-Schwarz inequality, we obtain the following:
\begin{equation}
\begin{split}
    1&=\bra{0}\acom{Q}{X}\ket{0}=\bra{0}\qty(Q X + X Q )\ket{0}   \\
    &\leq \norm{\hc{Q}\ket{0}}\norm{X \ket{0}}+\norm{\hc{X}\ket{0}}\norm{Q\ket{0}}  \; .
\end{split}
\label{eqn:acomm-plus-CS-ineq}
\end{equation}
Now let us assume that SUSY breaking does not occur for $g<\infty$.  
Then, combining the above with Lemmas \ref{thm:lemma-1} and \ref{thm:lemma-2}, we see that 
the right-hand side of Eq.~\eqref{eqn:acomm-plus-CS-ineq} vanishes in the limit $L \to \infty$,  
which is a contradiction.
\end{proof}

Thus, we have established that SUSY is broken for any $g < \infty$ regardless of the system size $L$. 
Considering the close relation \eqref{eqn:pairing-vs-SUSY-G} between the pairing amplitude $\Delta(g)$ 
and the SUSY-breaking order parameter $G$, this seems quite natural. 
Any finite $g$ explicitly introduces fermion pairing that makes $\Delta(g)$ have a generic finite value 
leading to finite $G$; in order for $G=0$, the pairing amplitude $\Delta(g)$ needs to take the fine-tuned value $1/g$ which is unlikely 
for generic values of $g$. 
\begin{figure}[htbp]
\includegraphics[width=\columnwidth,clip]{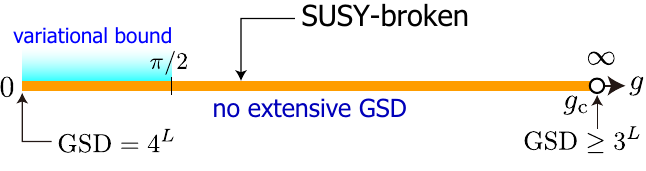}
\caption{
The phase diagram of the SUSY lattice model $H_{\rmm{S}} \qty(g)$ defined in Eq.\eqref{BCS-like}. The existence of the operator $X\qty(g)$ conjugate to $Q_\rmm{S}\qty(g)$ enables us to prove that SUSY is always broken for any finite $g$ (both in finite and infinite systems), i.e., 
$g_{\text{c}}=\infty$.  
A simple variational argument in Appendix \ref{sec:variational} tells that SUSY is broken at least for $g < \pi/2$ (i.e., $g_\rmm{c}\geq\pi/2$).   
The values of ground-state degeneracy (GSD) at $g=0$ and $g=\infty$ are also indicated (the value at $g=\infty$ are given  
in Refs.~\cite{moriya2018ergodicity, Katsura_2020}). Numerical and analytical results in Sec.~\ref{sec:SF} suggest that 
the extensive ground-state degeneracy is lifted (aside from $O(1)$ degeneracy associated with zero modes) for finite $g$.}
\end{figure}

\subsection{Dispersion of NG fermions}
Using a variational argument, 
we can establish a bound for the NG fermion dispersion, confirming the presence of gapless excitations.   
In the case of ordinary broken symmetries, the variational state for the corresponding NG {\em boson} is created by 
applying the generators of the broken symmetries to the ground state (the single-mode approximation).  
In the same spirit, we may create a trial state for the NG fermion by replacing the bosonic generators with the fermionic supercharge.
Specifically, following Refs.~\cite{sannomiya2016supersymmetry,sannomiya2017supersymmetry,sannomiya2019supersymmetry}, 
we consider the local supercharge operators
\begin{align}
q_i := (1+g \hc{c}_i\hc{c}_{i+1})d_i 
\end{align}
and introduce its momentum-$p$ component as:
\begin{align}
Q_p := \sum_{j =1}^L \eno{\im p j}q_j.
\end{align}
We then define the trial excited state with momentum $p$ as
\begin{align}
\ket{\psi_p} := \frac{\qty(Q_p+\hc{Q}_p)\ket{0}}{\norm{\qty(Q_p+\hc{Q}_p)\ket{0}}},\ \ \ \qty(p\neq 0),
\end{align}
where $\ket{0}$ is the normalized ground state.  
Then, the variational energy (the Feynman frequency) of this trial state $\ket{\psi_p}$ can be evaluated 
with the help of the inversion symmetry and inequalities given  
in Refs.~\cite{sannomiya2016supersymmetry,sannomiya2017supersymmetry,sannomiya2019supersymmetry} as: 
\begin{align}
\eps_{\rmm{var}}\qty(p) :&= \bra{\psi_p}H_S\ket{\psi_p}-E_{\text{g.s.}} \nt
&= \frac{\bra{0}\com{Q_p}{\com{H_S}{\hc{Q}_p}}\ket{0}}{\bra{0}\acom{Q_p}{\hc{Q}_p}\ket{0}} \nt
&\leq \sqrt{\frac{\bra{0}\com{\com{Q_p}{H_S}}{\com{H_S}{\hc{Q}_p}}\ket{0}}{\bra{0}\acom{Q_p}{\hc{Q}_p}\ket{0}}} \nt
&= \sqrt{\frac{C}{E_{\text{g.s.}}/L}}\abs{p}+O\qty(\abs{p}^3),
\end{align}
where $C$ is a positive constant independent of $L$, and $E_{\text{g.s.}}/L \, (\gneq 0)$ is the ground-state-energy density.
This variational energy provides an upper bound on the true dispersion of the Nambu-Goldstone mode due to the sum rule.
Here, it should be noted that this variational approach is not effective when the trial state $\ket{\psi_p}$ coincides 
with other orthogonal ground states $\ket{0}$.

A more systematic approach may be developed based on the non-linear representation of SUSY \cite{volkov1973neutrino} 
using the specific form of the operator $Q_S\qty(g)$ and its conjugate $O_j\qty(g)$ introduced 
in Sec.~\ref{sec:def-SUSY-SSB} [see Eq.~\eqref{eqn:def-conj-O}].  
It would be a very interesting open problem to pursue this direction and construct a systematic theory for
low-energy {\em fermionic} excitations in supersymmetric lattice models.

\section{Extensive ground-state degeneracy and Flat band}
\label{sec:SF}
In the trivial case $H_\text{S} (g = 0) = L$, there is $2^{L} \times 2^{L}$-fold ground-state degeneracy 
associated with the complete decoupling of all the fermionic degrees of freedom from the Hamiltonian. 
In the case of the Nicolai model (at $g=\infty$), on the other hand, it is known that the ground state is at least $3^L$-fold degenerate \cite{moriya2018ergodicity,Katsura_2020}.     
Such extensive ground-state degeneracy is thought of as a common feature in many supersymmetric lattice models 
and is dubbed superfrustration 
\cite{huijse2008superfrustration,huijse2012supersymmetric,PhysRevLett.101.146406,PhysRevLett.95.046403} 
by analogy to similar phenomena in frustrated magnetism.  

At the first order in $g$, the $c$-fermions develop finite dispersion and this $2^{L} \times 2^{L}$-fold degeneracy is partially lifted 
down to $O(2^{L})$ as we will see in Sec.~\ref{sec:numerical-spec}.   
Up to this order, the $d$-fermions do not appear in the Hamiltonian and constitute a zero-energy flat band which is responsible 
for the extensive ground-state degeneracy.  
In this section, we first numerically investigate the evolution of the energy spectrum in $g$ to see if the extensive ground-state degeneracy 
and the zero-energy flat band persists for finite $g$ or not.  
Then, we try to understand the spectral structure, especially the fate of the zero-energy flat band, by a perturbative argument.  

\subsection{Numerical spectrum}
\label{sec:numerical-spec}
It is useful to numerically calculate the eigenvalues of the Hamiltonian \eqref{BCS-like}.   
In Fig.~\ref{fig:spec-str}, all the eigenvalues of the finite-size Hamiltonian ($L=6$, 12 sites, $2^{12}=4096$ levels)  
are plotted for various values of $g$ 
($=0.01$, $0.1$, $0.2$, $1.0$, $5.0$, and $50.0$).     
The five-step structure for smallest $g=0.01$ [Fig.~\ref{fig:spec-str}(a)] is easily understood by 
keeping only the zero-energy $d$-band and the pairing part $g H_{\text{pair}}= 2g \sum_{k} |\sin k| \qty(f_{k}^{\dagger}f_{k} - \hf)$ 
[$k=0$, $\pm \pi/3$, $\pm 2\pi/3$, and $\pi$; see Eq.~\eqref{BGGtrf} for the definition of the $f$-fermion].   
The trivial $g^{0}$ energy $L \,(=6)$ [dashed line in Fig.~\ref{fig:spec-str}(a)] acquires the $O(g)$ correction 
$-g \sum_{k} |\sin k| = - 2\sqrt{3} g \,( = - 0.034641... \text{ at $g=0.01$})$  
to give the ground-state energy density up to the first order in $g$.  
For even-$L$, there are two zero-energy modes at $k=0$ and $\pi$ (when $L$ is odd, there is only one  
at $k=0$) and three states $f_{0}^{\dagger}|0\rangle_{f}$, $f_{\pi}^{\dagger}|0\rangle_{f}$, and $f_{0}^{\dagger}f_{\pi}^{\dagger} |0\rangle_{f}$ 
are degenerate with $|0\rangle_{c}$.    Neglecting the $g^{2}$ contributions, the $d$ fermions do not contribute the energy and any ($2^{L}$) 
$d$-fermion states have exactly zero energy.   Therefore, for sufficiently small $g$, the ground-state energy is 
$4 \times 2^{L} = 2^{8} = 256$-fold quasi-degenerate as is clearly seen in Fig.~\ref{fig:spec-str}(a).  
If we further increase $g$, the group of $4 \times 2^{L}$ quasi-degenerate levels acquires a finite slope [$\sim g^{2}$; compare the levels 
of the full Hamiltonian plotted by black points and those of the first-order Hamiltonian $L+g H_{\rmm{pair}}$ 
shown by red points in Figs.~\ref{fig:spec-str}(b) and (c)] 
which may suggest that the $4^{L}$-fold ground-state degeneracy that exists at $g=0$ is completely lifted for finite $g$.

The levels that constitute the second step are created by exciting one $f$-fermion to one of the four modes $k=\pm \pi/3$, $\pm 2\pi/3$ and have the energy 
(when $g=0.01$):
\begin{equation}
 E_{0} + 2g \sin (\pi/3)= E_{0} + \sqrt{3}g  \approx 5.98268...  
 \end{equation}
(the same energies for all the other cases) with 
 \[
 \begin{split}
 E_{0} :&= L- \frac{1}{2} \sum_{n=-L/2+1}^{L/2} 2g | \sin (2\pi n/L) | \\
&\approx 5.96536... 
\quad (L=6, \; g=0.01) \; .
\end{split} 
\] 
Since there are four different zero-mode occupations and four ($={}_{4} \text{C}_{1}$) different ways to excite a fermion 
to one of the four modes with finite energies, 
the second level is $4 \times 4 \times 2^{6} = 2^{10}=1024$-fold quasi-degenerate.  
We can repeat the same argument for higher levels as well.   For instance, in the third level, two $f$-fermions are excited 
(therefore, there is an energy increase by $2 \times \sqrt{3} g$).   We can create these state in six (${}_{4}\text{C}_{2}=6$) 
different ways, and, taking the zero modes into account, we have $6 \times 4 \times 2^{L} = 1536$-fold quasi-degeneracy.  

This picture is correct as long as $g$ is sufficiently small, i.e. when the $g^{2}$ part $H_{\rmm{Nic}}$ can be neglected.   
For larger $g$ (e.g., $g=0.1$), the $d$ fermions start interacting with $c$ through $H_{\text{Nic}}$.  Nevertheless, the gaps ($\sim g$) between 
adjacent levels are still robust against this $O(g^{2})$ effect [see Figs.~\ref{fig:spec-str}(b) and (c)]; 
the net outcome is that each step now has a finite slope ($\sim g^{2}$).   
The finite slopes imply that the $d$-band is no longer zero-energy flat; the band has at least a finite energy even if it is still flat.   

For larger $g\, (\gtrsim 1)$, the second term $g^{2} H_{\rmm{Nic}}$ that mixes the $c$ and $d$ fermions can no longer be neglected 
and the clear step-like structure which is indicative of a zero-energy flat band is smeared out [see Fig.~\ref{fig:spec-str}(d)].  
At around $g \gtrsim 5.0$, a new step appears around $E=0$ [see Figs.~\ref{fig:spec-str}(e) and (f)] signaling the zero-energy flat band 
which is known to exist \cite{Katsura_2020} in the Nicolai model.

\begin{figure}[H]
\includegraphics[width=0.45\columnwidth,clip]{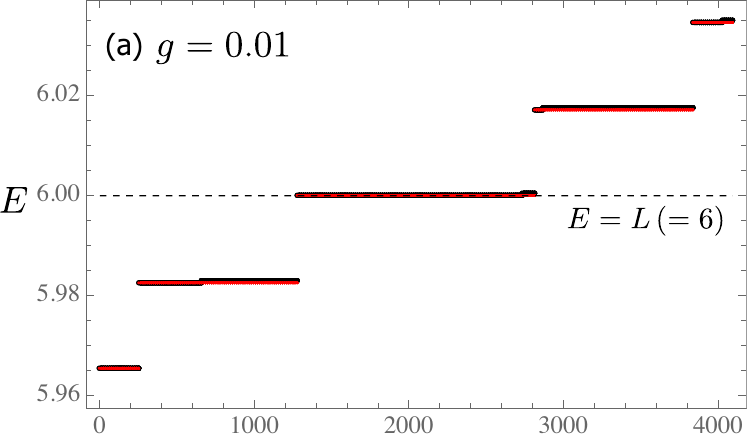} \hspace{2mm}
\includegraphics[width=0.45\columnwidth,clip]{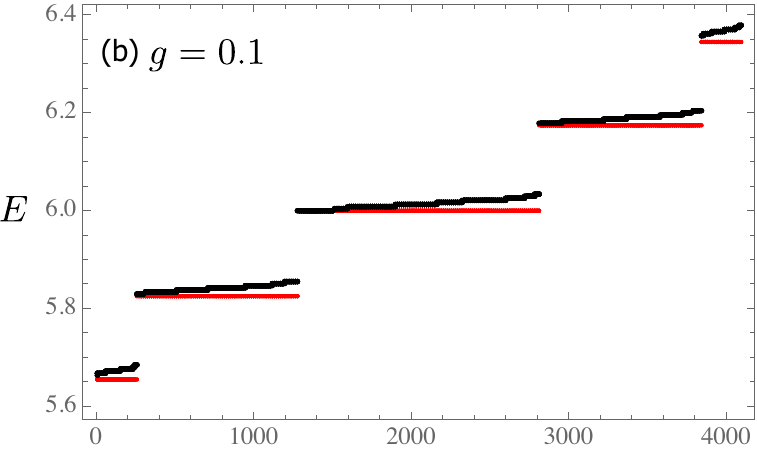} \hspace{2mm}
\includegraphics[width=0.45\columnwidth,clip]{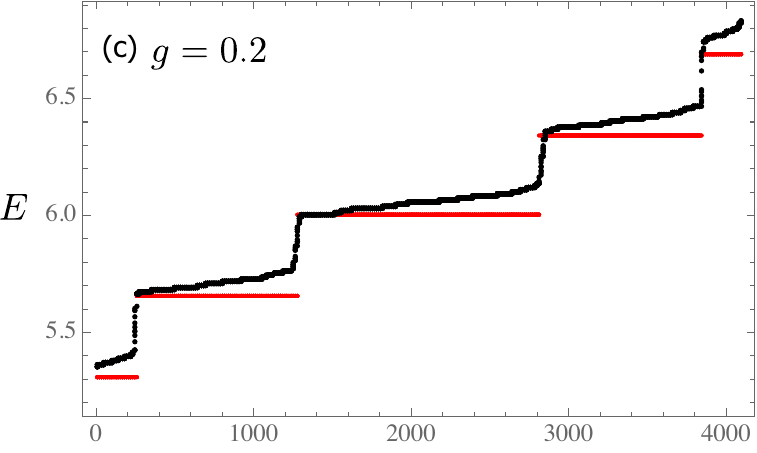} \hspace{2mm}
\includegraphics[width=0.45\columnwidth,clip]{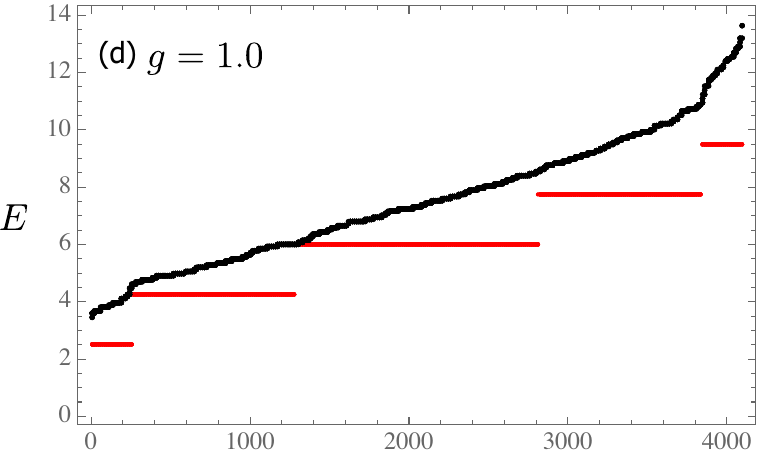} \hspace{2mm}
\includegraphics[width=0.45\columnwidth,clip]{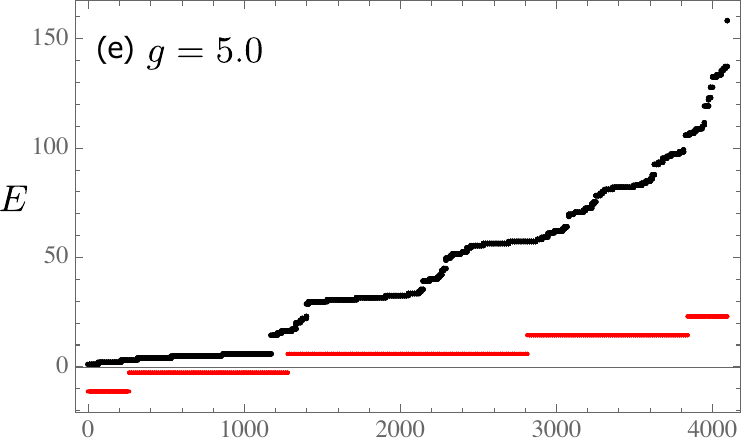} \hspace{5mm}
\includegraphics[width=0.45\columnwidth,clip]{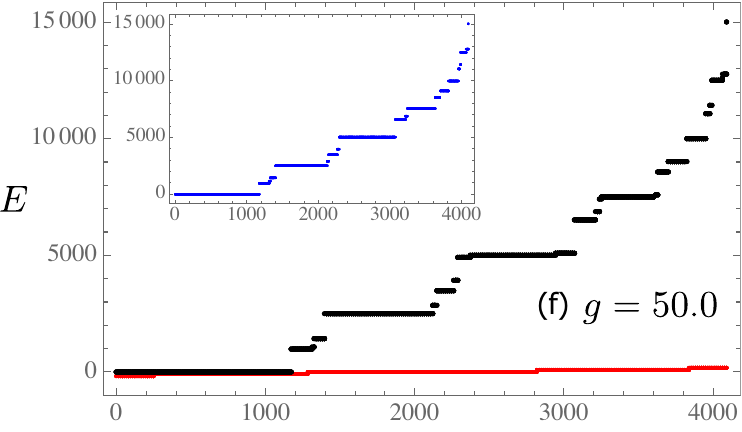} \hspace{2mm}
\caption{
All the eigenvalues (black dots) of the finite size ($L=6$, 12 sites) Hamiltonian $H_{\rmm{S}} \qty(g)$ 
for (a) $0.01$, (b) $0.1$, (c) $0.2$, (d) $1.0$, (e) $5.0$, and (f) $50.0$.  
The spectrum of the first-order partial Hamiltonian $L+g H_{\rmm{pair}}$ is plotted by red points.  
The system size is fixed to $L=6$ (i.e., $12$ sites, $2^{12}=4096$ levels).    
The five-step structure seen for $g=0.01$ is easily understood by taking into account 
the zero-energy flat $d$-band and the pairing part $g H_{\text{pair}}$ (see the text).   
For $g=1.0$ the structure looks smooth in contrast to the step-like structures 
for small $g$.  Between $g=1.0$ and $5.0$, reorganization of the levels occurs and the level structure changes essentially 
to that of the Nicolai model (see the inset of the last panel).  
\label{fig:spec-str}
}
\end{figure}

\subsection{Spectrum of $d$-fermions by first-order perturbation theory}
\label{sec:1st-order-in-Hnic}
Since the $d$ fermion appears only in the interaction part $H_{\rm Nic}$ (the Nicolai model), the ground states of $L+g H_{\rmm{pair}}$ 
are $2^L$-fold degenerate with respect to the $d_i$ fermion configurations (as has been mentioned, since there is additional four-fold degeneracy 
associated with the $c$-fermions, the actual degeneracy is $2^{L+2}$).  
Assuming $g\ll1$, we can treat $H_{\rmm{Nic}}$ as a perturbation to $L+g H_{\rmm{pair}}$ and
 find the single-particle dispersion of the $d_i$ fermions by first-order degenerate perturbation (in $g^{2}$).

By the unitary transformation defined in Eq.~\eqref{Ubogo}:
\begin{equation}
U_{c} =\prod_{k>0} \exp[
\im \frac{\pi}{4}  \qty(c_{k}c_{-k}+\hc{c}_{-k}\hc{c}_{k})
]   \; ,
\end{equation}
the ground states of $H_{\mathrm{pair}}$ are written as 
$\ket{n}_{c+d} :=\ket{\text{g.s.}}_c \otimes \ket{\psi_{n}}_d$
with $\ket{\text{g.s.}}_c:= U_{c} \ket{0}_c \,(=\ket{0}_{f})$
and $\ket{\psi_{n}}_d$ standing respectively for the ground state of $H_{\mathrm{pair}}$ and any of the $2^L$ possible states of the $d_i$ fermion, 
which enables us to obtain the following matrix elements:

\begin{subequations}
\begin{align}
&\bra{\text{g.s.}} \hc{c}_{i}c_{j} \ket{\text{g.s.}}_c =\hf \delta_{i,j} \\
& \bra{\text{g.s.}} n^{c}_{i}n^{c}_{i+1} \ket{\text{g.s.}}_c =\frac{1}{4}+\frac{1}{\pi^2} \; .
\end{align}
\end{subequations}
Therefore, the effective Hamiltonian for the $d$-fermions within the subspace spanned by $\ket{n}_{c+d}$ is calculated as: 
\begin{equation}
\bra{n}g^2 H_{\rmm{Nic}} \ket{n^{\prime}}_{c+d} =\qty(\frac{1}{4}+\frac{1}{\pi^2})g^2L \, \delta_{n,n^{\prime}} \; ,
\end{equation}
which means that the net effect of the first-order perturbation $g^{2} H_{\rmm{Nic}}$ is just to shift the ground-state energy density 
by a constant $\qty(\frac{1}{4}+\frac{1}{\pi^2})g^{2}$.  
Therefore, as far as $g$ is sufficiently small, the flatness of the $d$-band and the $2^L$-fold degeneracy of the ground state remains.  

\subsection{Second-order perturbation theory}

Next, we consider the second-order correction to the ground state energy from $g^{2} H_{\rmm{Nic}}$.   
Since the perturbation itself is proportional to $g^{2}$ and the energy denominators are $O(g)$, the correction is proportional to $g^3$ 
in contrast to the naive expectation $g^{4}$.    
As has been mentioned in Sec.~\ref{sec:1st-order-in-Hnic}, 
the ground state is given by $U_{c} \ket{0}_c$, and the nontrivial contributions from the intermediate states involving $d$-fermions are given by 
[see Eq.~\eqref{BGGtrf}]
\begin{align}
U_{c} \, \hc{c}_{p_1}\hc{c}_{p_2}\ket{0}_c \; ,
\end{align}
with $p_1$ and $p_2$ labeling momenta of the $c$-fermions.

If we introduce 
\begin{equation}
\begin{split}
&\mathcal{M}\qty(p_1,p_2;  i_1,i_2)  \\
& :=
\rmm{sgn}\qty(p_2)\eno{-\im \qty(p_1 i_1+p_2 i_2)}-
\rmm{sgn}\qty(p_1)\eno{-\im \qty(p_1 i_2+p_2 i_1)} \; , 
\end{split}
\end{equation}
the matrix element that gives the effective Hamiltonian for the $d$-fermions is given by
\begin{align}
& {}_{c}\!\bra{0}c_{p_2}c_{p_1}\hc{U}_{c} H_{\rmm{Nic}}U_{c} \ket{0}_c \nt
=&-\frac{\im}{L}\sum_{i=1}^L\Bigl[ 
\eno{-\im \qty(p_1+p_2)i}\Bigl\{
\frac{1}{\im \pi}\qty(\eno{-\im p_1}-\eno{-\im p_2})\theta\qty(p_1p_2)\nt
&+\frac{1}{4}
\qty(\rmm{sgn}\qty(p_1)-\rmm{sgn}\qty(p_2))
\qty(1-\eno{-\im \qty(p_1+p_2)})
\Bigl\}  \nt
&+ \mathcal{M}\qty(p_1,p_2;  i,i+2)\hc{d}_{i+1}d_i  
+\mathcal{M}\qty(p_1,p_{2}; i+2,i)\hc{d}_{i}d_{i+1} \nt
&+\left\{ \mathcal{M}\qty(p_1,p_2; i,i)+\mathcal{M}\qty(p_1,p_2; i+1,i+1)-1 \right\} n^d_i 
\Bigr] \; .
\end{align}

By using this matrix element, the second-order perturbation energy including $d$-fermions is given by 
\begin{equation}
E_{\text{g.s.}}^{\qty(2)} = 
-\frac{g^{3}}{2} \sum_{p_1,p_2}\frac{\abs{_{c}\!\bra{0}c_{p_2}c_{p_1}\hc{U}_{c} H_{\rmm{Nic}}U_{c} \ket{0}_c
}^2}{ \abs{\sin(p_1)}+\abs{\sin(p_2)} } \; . 
\label{eqn:2nd-eff-Ham-d}
\end{equation}
Therefore, the second-order processes introduce hopping terms $\hc{d}_{i+1}d_i$ and $\hc{d}_{i}d_{i+1}$ as well as other 
interaction terms like $n_i^d n_j^d$ and $\hc{d}_{i+1} n_j^d d_i$, which generically endow the $d$-band with a finite dispersion.   

\section{Conclusion}
\label{sec:conclusion}
In this paper, we have investigated the ground states and low-lying excitations of a supersymmetric model of interacting lattice fermions 
paying particular attention to whether supersymmetry is broken or not in the ground state.  
The model contains a fermion-pairing term as well as the Nicolai model respectively as the first and second-order terms in a control parameter $g$.  
Remarkably, the superconducting order parameter (i.e., the pairing amplitude) is directly related to one of the order parameters ($G$) of 
SUSY breaking.   

We have shown that, for both finite and infinite system sizes, 
supersymmetry is always broken except at $g = \infty$ where the model reduces to the Nicolai model.  
To prove this, we have exploited the existence of an operator $\chi$ which has a non-trivial anti-commutator with the supercharge $Q_{\text{S}}(g)$. 
We also physically interpreted this by numerically calculating the superconducting [$\Delta(g)$] and SUSY-breaking ($G$) order parameters.   

In relativistic systems, it is known that spontaneously broken SUSY implies a fermion with gapless $p$-linear dispersion (the NG fermion).  
To see if this is the case in our lattice fermion model, we have applied the single-mode approximation 
and confirmed that the spectrum of the NG fermion is bounded from above by a dispersion linear 
in momentum $p$, implying the existence of a lattice counterpart of the NG fermion in our model.  

With numerical diagonalization and perturbative calculations, we have investigated the extensive ground-state degeneracy 
(which is exponentially large in the system size) and the zero-energy flat band that is suggested for sufficiently small $g$.  
Our numerical results indicate that the zero-energy flat band and the associated extensive degeneracy are not robust 
against finite $g$; the extensive degeneracy observed at $g=0$ and $\infty$ is resolved almost completely leaving $O(1)$ degeneracy 
associated with the existence of fermion zero modes.  We then tried to understand how the interactions 
between the $c$ and $d$ fermions affects the dispersion of the $d$-fermion 
using perturbative argument in $g$, and showed that processes second order in the interaction part $g^{2} H_{\rmm{Nic}}$ turn 
the flat $d$-fermion band dispersive.   

Although our study has focused on the mathematical aspects of a supersymmetric lattice model, 
future work could investigate the physical implications of SUSY breaking. Specifically, it would be interesting 
to explore the effects of supersymmetry breaking on the transport properties of  supersymmetric systems, 
such as the electrical and thermal conductivities.  Also, a systematic study focusing more on general aspects (dispersion, etc.) 
of the non-relativistic Nambu-Goldstone fermions would be highly desirable, as they are crucial in understanding the mechanism 
of supersymmetry breaking. 


\section*{ACKNOWLEDGEMENTS}
The author (UM) is sincerely grateful to Kenji Simomura, who provided valuable comments on several occasions. 
The authors also thank Hosho Katsura and Yu Nakayama for helpful discussions and correspondence.  
The author (KT) is supported in part by Japan Society for the Promotion of Science (JSPS) KAKENHI Grant No. 21K03401. 

\appendix
\section{Diagonalization of $H_{\rmm{pair}}$ and its ground-state energy}
\label{App:freetaikakuka}
In the small $g$ limit, the first non-trivial contribution comes from the pairing part $g H_{\rmm{pair}}$.  
In this appendix, we diagonalize the pairing part $g H_{\rmm{pair}}$ of the Hamiltonian.  
By applying the discrete Fourier transform $c_i=\frac{1}{\sqrt{L}}\sum_{k} \eno{\im k i}c_k$, 
we can rewrite $H_{\rmm{pair}}$ in the Bogoliubov-de Genues (BdG) form: 
\begin{equation}
\label{bdg-diag}
H_{\rmm{pair}}
=\sum_{k>0} 
    \qty(\hcc_{k},c_{-k})
    \begin{pmatrix}
    0 & 2\im \sin(k) \\
    -2\im \sin(k) & 0
    \end{pmatrix}
    \begin{pmatrix}
    c_{k} \\
    \hcc_{-k}
    \end{pmatrix}  \; . 
\end{equation}
If we define the Bogoliubov quasiparticle $f_{k}$ as:
\begin{equation}
\label{Bogoliubov}
    \begin{pmatrix}
    f_{k} \\
    \hc{f}_{-k}
    \end{pmatrix}
    =\frac{1}{\sqrt{2}}
    \begin{pmatrix}
    1 & \im \\
    \im & 1 
    \end{pmatrix}
    \begin{pmatrix}
    c_{k} \\
    \hc{c}_{-k}
    \end{pmatrix}
    \; ,
\end{equation}
the pairing part $H_{\rmm{pair}}$ takes the following diagonal form:
\begin{equation}
\begin{split}
& H_{\rmm{pair}} \\
& = \sum_{k>0} 
    \qty(\hc{f}_{k},f_{-k})
    \begin{pmatrix}
    2\sin(k) & 0 \\
    0 & -2\sin(k)
    \end{pmatrix}
    \begin{pmatrix}
    f_{k} \\
    \hc{f_{-k}}
    \end{pmatrix} \\
& =\sum_{k} 2 |\sin(k)| \qty(
    \hc{f}_{k} f_{k} - 1/2) 
 \end{split}
 \label{eqn:Hfree-diagonalized}
\end{equation}
(in passing to the last line, we have extended the $k$-summation to $(-\pi,\pi]$).   

The ground state of $g H_{\rmm{pair}}$ is the vacuum of the Bogoliubov quasiparticle $f$, its energy is given by  
the zero-point energy of \eqref{eqn:Hfree-diagonalized}:
\begin{equation}
e^{\rmm{pair}}_{0} 
= - \int^{\pi}_{-\pi}\frac{\rmm{d}k}{2\pi} g | \sin(k) | 
=-\frac{2g}{\pi} \; .
\label{eqn:Egs-density}
\end{equation}

For later analyses, we next rewrite the Bogoliubov transformation \eqref{Bogoliubov} and the ground state of 
$H_{\rmm{pair}}$ in terms of unitary transformation.  Specifically, if we define
\begin{align}
\label{Ubogo}
    \left\{
      \begin{aligned}
      & Q_{k}:=\theta_{k} \qty(c_{k}c_{-k}+\hc{c}_{-k}\hc{c}_{k}) \\
      & U_{c} :=\prod_{k>0} \exp(\im \theta_{k} Q_{k}) \; ,
      \end{aligned}
  \right.
\end{align}
the Bogoliubov quasiparticles $f_{k}$ are given for $k>0$ as:
\begin{align}
\label{BGGtrf}
\left\{
\begin{aligned}
& f_k=U_{c} c_{k}\hc{U}_{c}=\cos(\theta_{k})c_{k}+\im \sin(\theta_{k}) \hc{c_{-k}} \\
& \hc{f}_{-k}=U_{c} \hc{c_{-k}} \hc{U}_{c} =\im \sin(\theta_{k}) c_{k}+\cos(\theta_{k})\hc{c_{-k}}  \; .
\end{aligned}
\right.
\end{align}
Therefore, if we take
\begin{align}
\theta_{k}=\frac{\pi}{4} \; ,
\end{align}
Eq.~\eqref{BGGtrf} reproduces \eqref{Bogoliubov}.  
 It then follows that the ground state of $H_{\rmm{pair}}$ (i.e., the vacuum of the $f$-fermions $\ket{0}_{f}$) 
 is given by $\ket{0}_{f} = U_{c} \ket{0}_{c}$ with $\ket{0}_{c}$ being the vacuum of $c_{k}$:
\[
\begin{split}
& \hc{U}_{c} H_{\rmm{pair}} U_{c} \ket{0}_{c} =e_{\text{g.s.}}^{\rmm{pair}}\ket{0}_{c} \\
& \naraba 
H_{\rmm{pair}} U_{c} \ket{0}_{c} =e_{\text{g.s.}}^{\rmm{pair}}U_{c} \ket{0}_{c}  \;  .
\end{split}
\]

\section{Variational argument for $g_\rmm{c}$}
\label{sec:variational}
In general, if the Hamiltonian can be decomposed into several parts (which are not commuting in general): 
$H=\sum_i H_i$, the ground state $\ket{0}$ of the total Hamiltonian $H$ generally includes excited states of each $H_i$.    
Then, the following inequality for the total ground-state energy $e_{\text{g.s.}}$ and those [$E_{\text{g.s.}}^\qty(i)$] of the partial Hamiltonians $H_i$ 
follows from the variational principle: 
\begin{equation}
\label{g.s.ineq}
E_{\text{g.s.}}=\bra{0} H \ket{0} 
=\sum_i \bra{0} H_i \ket{0} 
\geq \sum_i E_{\text{g.s.}}^\qty(i)
\end{equation}
(the equality holds when the ground state $\ket{0}$ optimizes all $H_i$ simultaneously). 
That is, the ground-state energy of the full Hamiltonian is greater than or equal to the sum of the ground-state energies $E_{\text{g.s.}}^\qty(i)$
of the partial Hamiltonians.

Now we apply the above argument to the Hamiltonian \eqref{BCS-like}.  
Using the fact that the ground-state energy density of $g^2H_{\rmm{Nic}}$ and $gH_{\rmm{pair}}$ are given respectively by $0$
\cite{moriya2018ergodicity,Katsura_2020} and $-\frac{2g}{\pi}$ 
[see Eq.~\eqref{eqn:Egs-density}], we obtain the following variational bound:
\begin{equation}
e_{\text{g.s.}}(g)  \geq   \frac{2}{\pi}\qty(\frac{\pi}{2}-g) \; .
\label{eqn:var-lower-bound}
\end{equation}
Thus, we see that SUSY SSB occurs at least in the interval
\begin{equation}
0 \leq g < \frac{\pi}{2}  \; .
\label{eqn:SSB-phase-variational}
\end{equation}
On the other hand, when $g \to \infty$, the model reduces to the Nicolai model,  
where unbroken SUSY is established rigorously \cite{moriya2018ergodicity,Katsura_2020} 
(see Appendix~\ref{App:groundstatecount} for a quick summary). 
Therefore, assuming that there is a single phase transition, we can conclude that the critical point $g_{\text{c}}$ 
above which the SUSY-symmetric phase persists satisfies $\pi/2 \leq g_{\text{c}} \leq \infty$.

\section{Ground state at $g=\infty$ (Nicolai Model)}
\label{App:groundstatecount}

At $g=\infty$, the Hamiltonian $H_{\rmm{S}} \qty(g)/g^2$ reduces to  $H_{\rmm{Nic}}$ since $L$ and $g H_{\rm pair}$ can be neglected. Furthermore, the supercharge operator $Q_{\rmm{S}} \qty(g) /g$ now reads   
$Q_{\rmm{Nic}}= \sum_{i} \hc{c}_{i}\hc{c}_{i+1}d_i$, which, with due redefinition of the fermions $\hc{c}_{i} \migi c_{2i-1}$, $d_{i} \migi - \hc{c}_{2i}$, 
reduces to the supercharge of the Nicolai model \cite{nicolai1976supersymmetry}.   
Then, the Hamiltonian $g^2H_{\rmm{Nic}}$ is expressed as the anti-commutator of $Q_{\rmm{Nic}}$ and $\hc{Q}_{\rmm{Nic}}$:
\begin{align}
H_{\rmm{Nic}} = \acom{Q_{\rmm{Nic}}}{\hc{Q}_{\rmm{Nic}}}.
\end{align} 
It is known \cite{moriya2018ergodicity,Katsura_2020} that in the Nicolai model, spontaneous breaking of supersymmetry does not occur, 
and hence the energy of the ground state is exactly zero
(the vacuum of the $c$ and $d$ fermions is a trivial zero-energy state that saturates the lower bound of the energy \footnote{%
However, this zero-energy state is {\em not} the unique ground state.  As has been pointed out 
in Refs.~\cite{moriya2018ergodicity,Katsura_2020}, the ground state of the Nicolai model is extensively degenerate.}).  
This fact is used in the variational argument in Sec.~\ref{sec:variational}.

\section{Exact eigenstates}
\label{App:EE}
In our model \eqref{BCS-like}, exact eigenstates with eigenvalue $L$ exist under periodic boundary conditions when $L$ is even, as well as under open boundary conditions for arbitrary $L$. These eigenstates are simultaneous eigenstates of both $H_{\text{pair}}$ and $H_{\text{Nic}}$ with eigenvalue 0.

Recalling the expressions:
\begin{align}
&H_{\text{Nic}}=\acom{Q_{\text{Nic}}}{\hc{Q}_{\text{Nic}}} \nonumber \\
&Q_{\text{Nic}}=\sum_{j=1}^L \hc{c}_{i}\hc{c}_{i+1}d_i \nonumber \\
&\hc{Q}_{\text{Nic}}=-\sum_{j=1}^L c_{i}c_{i+1}\hc{d}_i
\end{align}
in the $c$-fermion sector, it can be verified that the ``charge-density wave'' states where the fermion numbers alternate between 1 and 0 in real space:
\begin{equation}
\label{eqn:staggered-c-sector}
\begin{split}
&\ket{0,1,0,1,0,\dots}_c \nonumber \\
&\ket{1,0,1,0,1,\dots}_c
\end{split}
\end{equation}
are annihilated by the action of both $Q_{\text{Nic}}$ and $\hc{Q}_{\text{Nic}}$. Hence, the states in \eqref{eqn:staggered-c-sector} are the 0-energy ground states of $H_{\text{Nic}}$.

As both $\hc{c}_{i}\hc{c}_{i+1}$ and $c_{i+1} c_i$ annihilate $|10 \rangle$ and $|01\rangle$, it is easy to see that the states in \eqref{eqn:staggered-c-sector} are also 0-energy eigenstates of $H_{\text{pair}}$. 
Therefore, for any choice of the $d$-fermion state $\ket{\psi_n}_d$, the states of the form
\begin{equation}
\label{eqn:staggered-c,d-sector}
\begin{split}
\ket{\Psi}_{1}:= &\ket{0,1,0,1,0,\dots}_c \otimes \ket{\psi_n}_d  \\
\ket{\Psi}_{2}:= &\ket{1,0,1,0,1,\dots}_c \otimes \ket{\psi_n}_d
\end{split}
\end{equation}
are eigenstates of $H_{\text{S}}(g)$ with eigenvalue $L$ \footnote{%
The condition that $L$ must be even is necessary for the alternating configurations \eqref{eqn:staggered-c-sector} 
to be compatible with the boundary condition.}:
\begin{align}
H_{\text{S}}(g)\ket{\Psi}_{1,2}&=\qty(L+gH_{\text{pair}}+g^2H_{\text{Nic}})\ket{\Psi}_{1,2} \nonumber \\
&=L \ket{\Psi}_{1,2}  \; .
\end{align}
As there are $2^{L}$ of the $d$-fermion states $\ket{\psi_n}_d$, the eigenvalue $L$ is at least 
$2\times2^L=2^{L+1}$-fold degenerate.   According to the numerical results from exact diagonalization, 
there seem to be more states with the energy $L$. 

\section{Mapping to a spin-1/2 chain}
\label{App:JW}
The model $H_{\rmm{S}} \qty(g)$ in Eq.~\eqref{BCS-like} describes a system of one-dimensional lattice spinless fermions that contains 
various interactions.  In this Appendix, we show that, when restricted to particular sectors, the model can be mapped 
to a tractable spin-1/2 model via the Jordan-Wigner (JW) transformation.    
The boundary effects are ignored for simplicity.  We first introduce two species of spins 
$\sigma_i^{-}$ and $\tau_i^{-}$ corresponding to $c_i$ and $d_i$, respectively. 
Then, the JW transformation is defined as 
\begin{align}
    \left\{
      \begin{aligned}
      & c_i:=\qty(\sz_1 \tz_1) \cdots \qty(\sz_{i-1} \tz_{i-1})\sigma^{-}_i \\
      & d_i:=\qty(\sz_1 \tz_1) \cdots \qty(\sz_{i-1} \tz_{i-1}) \qty(-\sz_i) \tau^{-}_{i}  \; ,
      \end{aligned}
  \right.
\end{align}
where $\sz_i$ and $\tz_i$ are related to the fermion numbers as
\begin{align}
    \left\{
      \begin{aligned}
      & n^c_i:=\hf \qty(1+\sz_i) \\
      & n^d_i:=\hf \qty(1+\tz_i)  \; .
      \end{aligned}
  \right.
\end{align}
With these equations, the Hamiltonian is now recast as:
\begin{equation}
\begin{split}
   H_{\rmm{S}} \qty(g) = &
    -\frac{g}{2} \sum_{i=1}^L \tz_i \qty(
   \sx_i \sx_{i+1}-\sy_i \sy_{i+1}
    ) \\
    & - g^2 \sum_{i=1}^L \qty(
    \sigma^{+}_{i} \tau^{-}_i \tau^{+}_{i+1} \sigma^{-}_{i+2} + \rmm{H.c.}
    )  \\
    & + \frac{g^2}{4}\sum_{i=1}^L\qty{
    \sz_{i}\sz_{i+1}-\tz_{i}\qty(\sz_{i}+\sz_{i+1})
    }  \\
    & +\qty(1+\frac{g^2}{4})L \; ,
\end{split}
\label{eqn:equivalent-spin-Ham}
\end{equation}
where $L$ is the total number of lattice sites.

The resulting Hamiltonian contains three and four-spin interactions and is not solvable in general.  
However, if we note that the model \eqref{eqn:equivalent-spin-Ham} conserves 
$\sum_{i}\tau^{z}_{i}$ ($\sum_{i} n^{d}_{i}$ in the fermion language), 
we see that within the two special sectors with $\sum_{i}\tau^{z}_{i}=  L$ and $- L$ (fully polarized $\tau$),  
the spin-1/2 model \eqref{eqn:equivalent-spin-Ham} reduces to a more tractable one.  
Specifically, within the sector with all the $\tau$-sites occupied by $\tz_i= 1$ or $-1$ (corresponding to $n_i^d=1$ or $0$, respectively), 
the four-spin part vanishes and the Hamiltonian $H_{\rmm{S}}^{\pm}(g)$ reduces to:
\begin{equation}
\begin{split}
H_{\rmm{S}}^{\pm}(g) = & \mp \frac{g}{2} \sum_{i}\qty(
\sigma_i^x\sigma_{i+1}^x-\sy_i\sy_{i+1}
)   + \frac{g^2}{4}\sum_{i} \sigma_{i}^z \sigma_{i+1}^z  \\
& \mp \frac{g^2}{2}\sum_{i} \sigma_{i}^z  
+\qty(1+\frac{g^2}{4})L \; .
\end{split}
\end{equation}

By applying the unitary transformation:
\begin{align}
U:=\sigma_{1}^x\sigma_{3}^x\sigma_{5}^x\cdots
\end{align}
that flips the signs of $\sigma_{2i-1}^y$ and $\sigma_{2i-1}^z$, 
we obtain the transformed Hamiltonian $\widetilde{H}^{\pm}_{\rmm{S}}(g)$ 
which is nothing but the well-known $S=1/2$ XXZ model in a staggered magnetic field:

\begin{equation}
\begin{split}
\widetilde{H}_{\rmm{S}}^{\pm}(g) := & \hc{U}H_{\rmm{S}}^{\pm}(g) U  \\
= & \mp \frac{g}{2} \sum_{i}\qty(
\sigma_i^x\sigma_{i+1}^x + \sy_i\sy_{i+1}
)   
- \frac{g^2}{4}\sum_{i} \sigma_{i}^z \sigma_{i+1}^z  \\
& \mp \frac{g^2}{2}\sum_{i} (-1)^{i}  \sigma_{i}^z  
+\qty(1+\frac{g^2}{4})L \; .
\end{split}
\label{eqn:XXZ-staggered-F}
\end{equation}  
The sign of the first term is irrelevant as we can always make it positive by a staggered $\pi$-rotation $\prod_{i} (\im \sigma^{z}_{2i-1})$ 
along the $z$-axis.

\section{Superfield formalism}
\label{App:chobakeisiki}
The superfield formalism is a method for describing supersymmetric multiplets as superfields on a super space $\qty(x^{\mu},\theta_a,\Bar{\theta}_a)$, which includes virtual Grassmann coordinates $\qty(\theta_{a},\Bar{\theta}_{a})$ in addition to the spacetime coordinates $x^{\mu}=\qty(t,x,\cdots)$
\cite{wess1992supersymmetry,SUSYfracton,weinberg1995quantum}. 
When the supercharges $Q$ consist of one or three fermions [this is the case in the model \eqref{eqn:def-super-charge}], 
the corresponding models can be described using chiral superfields, which have fermionic statistics.   
This allows for a simple expression of supersymmetric transformations.
\subsection{Superfield formalism: general construction}
\subsubsection{Supersymmetric Transformations and Super-covariant Derivatives}
Consider a function of time $t$ and a pair of complex-conjugate Grassmann numbers $\theta$ and $\Bar{\theta}$: 
a superfield $\Phi(t,\theta,\Bar{\theta})$. The relations between $t$, $\theta$, and $\bar{\theta}$ are given by
\begin{align}
\left\{
\begin{aligned}
& \com{t}{\theta}=\comm{t}{\Bar{\theta}}=0 \\
& \acom{\theta}{\Bar{\theta}}=0 \; , \quad \theta^2=\Bar{\theta}^2=0 \; .
\end{aligned}
\right.
\end{align}
By the nilpotency, any functions of Grassmann variables have at most linear dependence on them.  
The Grassmann differential satisfies anti-commutation relations that are analogous to the canonical commutation relation 
$\comm{\partial}{x}=1$, namely
\begin{align}
\left\{
\begin{aligned}
&\acom{\theta}{\del{}{\theta}}
=\acom{\bar{\theta}}{\del{}{\bar{\theta}}}
=1 \\
& \acom{\theta}{\del{}{\bar{\theta}}}
=\acom{\bar{\theta}}{\del{}{{\theta}}}=0
\end{aligned}
\right.
\end{align}
All other commutation relations are defined to be commutative. Furthermore, integration to Grassmann variables is defined to be the same as differentiation.

As differential operators on the superfield, we define the supersymmetric transformation $\mathcal{Q}$ and the super-covariant derivative $\mathcal{D}$ as follows:
\begin{align}
\left\{
\begin{aligned}
& \mathcal{Q}:=\del{}{\theta}+\im \Bar{\theta} \del{}{t} \; , \quad 
\Bar{\mathcal{Q}}:=\del{}{\Bar{\theta}}+\im \theta \del{}{t} \\
& \mathcal{D}:=\del{}{\theta}-\im \Bar{\theta} \del{}{t} \; , \quad 
\Bar{\mathcal{D}}:=\del{}{\Bar{\theta}}-\im \theta \del{}{t}
\end{aligned}
\right.
\end{align}
Then, we have the following relations:
\begin{align}
\left\{
\begin{aligned}
& \mathcal{Q}^2=\Bar{\mathcal{Q}}^2=\mathcal{D}^2=\Bar{\mathcal{D}}^2=0 \\
& \acom{\mathcal{Q}}{\Bar{\mathcal{Q}}}=2\im \del{}{t} \; , \quad  
\acom{\mathcal{D}}{\Bar{\mathcal{D}}}=-2\im \del{}{t} \\
& \acom{\mathcal{Q}}{\mathcal{D}}=\acom{\mathcal{Q}}{\Bar{\mathcal{D}}}
= \acom{\Bar{\mathcal{Q}}}{\mathcal{D}}=\acom{\Bar{\mathcal{Q}}}{\bar{\mathcal{D}}}=0
\end{aligned}
\right.
\end{align}

In this way, it is found that $\acom{\mathcal{Q}}{\Bar{\mathcal{Q}}}$ is consistent with $\acom{Q}{\hc{Q}}=H$ to generate time translations. Also, the super-covariant derivative behaves covariantly, as its name suggests, under the super-symmetric transformation.

\subsection{Application to the model $H_{\rmm{S}} \qty(g)$}
\label{App:chobakeisiki2}
The spinless fermion model \eqref{BCS-like} we consider can be described by superfields 
that satisfy both Fermi statistics and chiral conditions. Let $C_i=C_i(t,\theta,\bar{\theta})$ be a superfield that satisfies Fermi statistics. The chiral condition requires the conjugate super-covariant derivative $\bar{\mathcal{D}}$ to annihilate $C_i$:
\begin{align}
\label{chiralC}
0:=\bar{\mathcal{D}}C_{i}=\qty(\del{}{\Bar{\theta}}-\im \theta \del{}{t})C_{i}  \; .
\end{align}
Here, we have $\bar{\mathcal{D}}\theta=0$ and $\bar{\mathcal{D}}\qty(t+\im\bar{\theta}\theta)=0$. Therefore, by allowing $C_i$ to depend only on $\im\bar{\theta}\theta$ and $\theta$, the condition \eqref{chiralC} is automatically satisfied. 
Expanding $C_i$ in powers of $\im\bar{\theta}\theta$ and $\theta$, we obtain the component representation:
\begin{align}
C_{i}=c_{i}\qty(t) +\theta \sqrt{2} F_{i}\qty(t) + \bar{\theta} \theta \im \Dot{c}_{i}\qty(t)  \; .
\end{align}

This representation can also be obtained by expanding $C_i$ as $C_i\qty(t,\theta,\bar{\theta})$ and determining the component fields to make it vanish under $\bar{\mathcal{D}}$. By identifying $c_i$ as the annihilation operator of the fermion, we obtain the action corresponding to the spinless fermion model.

Furthermore, we define $\hc{C}$ as:

\begin{align}
\hc{C}_{i}=\hc{c}_{i}+ \bar{\theta}\sqrt{2}\hc{F}_{i}-\bar{\theta}\theta \im \hc{\Dot{c}_{i}}
\end{align}

This is also a superfield, but not a chiral superfield. We also define the charge conjugate transformation of $C$ as the operation of taking the Hermitian conjugate of the operators:

\begin{align}
C_{i}^{c}:=\hc{c}_{i}+ \theta\sqrt{2} \hc{F}_{i}+\bar{\theta}\theta \im \hc{\Dot{c}_{i}}
\end{align}

This is also a chiral superfield.

By computing the supersymmetric transformation of $C_i$, we can confirm the construction of supersymmetric invariant actions. Specifically, we have
\begin{align}
\label{SUSYtrfC}
\left\{
\begin{aligned}
& \mathcal{Q} C_{i}
=\sqrt{2}F_{i}-\bar{\theta} \im \Dot{c}_{i}+\im \bar{\theta} \Dot{c}_{i}=\sqrt{2}F_{i} \\
& \bar{\mathcal{Q}}C_{i}=\theta \im \Dot{c}_{i}+\im \theta \Dot{c}_{i} =\theta 2\im \Dot{c}_{i} \\
\end{aligned}
\right.
\end{align}
which shows that the $\theta$ term of chiral superfields is invariant under supersymmetric transformations that are at most times derivatives. Therefore, in constructing supersymmetric invariant actions, we can use not only the $\bar{\theta}\theta$ term but also the $\theta$ term of chiral superfields. Furthermore, by taking the Hermitian conjugate of each component, we can see that while $\hc{C}_{i}$ is not a chiral superfield, the $\bar{\theta}\theta$ term also undergoes supersymmetric transformations that are at most time derivatives. Hence, if we keep in mind that $C_{i}$ carries fermionic statistics and $C_{i}^2$ should vanish, we can construct the Lagrangian as the sum of the ($\bar{\theta}\theta$ integral of $\hc{C}C$), the (odd order $\theta$ integral of $C_i$'s), and their Hermitian conjugates.

 Specifically, we can construct the fermion kinetic term by adding the following term:

\begin{align}
\sum_{i=1}^L \intd \theta \rmm{d}\bar{\theta} \hf
\hc{C}_{i}C_{i}=\sum_{i} \qty(\im \hc{c}_{i}\Dot{c}_{i}+\hc{F}_{i}F_{i}+\qty(\rmm{t.d.}) )
\end{align}

In our model \eqref{BCS-like}, we add the fermionic chiral superfield $D_{i}$ in addition to $C_{i}$:

\begin{align}
D_{i}=d_{i}+\theta \sqrt{2} G_{i}+\bar{\theta}\theta \im \Dot{d}_{i}
\end{align}

If we formally replace $c$ and $d$ fermions in the supercharge \eqref{eqn:def-super-charge} with the corresponding 
chiral superfields $C$ and $D$ in the above, we can write the following term in the Lagrangian:  
\begin{align}
\label{chobatheta}
& \sum_{i=1}^L \intd \theta \qty(1-g C_{i}C_{i+1} ) D_{i}^{c} + \rmm{H.c.} \notag\\
&=\sqrt{2} \sum_{i=1}^L \qty(
\hc{G}_{i} \qty(1-gc_{i} c_{i+1} )+
F_{i} \qty(
\hc{\hc{d_{i}}c_{i+1}-d_{i-1}}c_{i-1}
)
)\nt
&\phantom{=}     + \rmm{H.c.}
\end{align}

This term can be expressed using superfields and corresponds to the Hamiltonian in the action. 
The action corresponding to the Hamiltonian can be given as:
\begin{equation}
\begin{split}
S:=& \sum_{i=1}^L \intd t \intd \theta \rmm{d}\bar{\theta}
\hf \qty(\hc{C}_{i}C_{i}+\hc{D}_{i}D_{i})  \\
& - \frac{1}{\sqrt{2}} \sum_{i} \intd t \intd \theta
\qty(1-g C_{i}C_{i+1} ) D_{i}^{c} + \rmm{H.c.}  \\
=& \intd t \sum_{i=1}^L \qty(
\im \hc{c}_{i}\Dot{c}_{i}+\im \hc{d_{i}}\Dot{d}_{i}
+\abs{F_{i}}^{2}+\abs{G_{i}}^{2}9
)  \\
& -\intd t \sum_{i=1}^L g F_{i} \qty(
\hc{d_{i}} c_{i+1}-\hc{d}_{i-1}c_{i-1}
)+ \rmm{H.c.} \\
& -\intd t \sum_{i=1}^L G_{i} \qty(1+g \hc{c}_{i}\hc{c}_{i+1} )+ \rmm{H.c.}
\end{split}
\end{equation}

This is because upon eliminating the auxiliary fields $F_{i}$ and $G_{i}$, the corresponding Hamiltonian takes the form of the Hamiltonian \eqref{BCS-like}.

Through the Legendre transformation, the Hamiltonian can be written as:
\begin{equation}
\begin{split}
H = &
\sum_{i=1}^L\qty(- \abs{F_{i}}^{2}- \abs{G_{i}}^{2})  \\
&+ \sum_{i=1}^L g F_{i} \qty(
\hc{d_{i}} c_{i+1}-\hc{d}_{i-1}c_{i-1}
)+ \rmm{H.c.}  \\
&+ \sum_{i=1}^L G_{i} \qty(1+g \hc{c}_{i}\hc{c}_{i+1} )+ \rmm{H.c.}
\end{split}
\label{formean}
\end{equation}
The following equations of motion enable us to express the auxiliary fields with $c_{i}$ and $d_{i}$:
\begin{align}
\left\{
\begin{aligned}
\label{EqMaxi}
& \hc{F}_{i}=g \qty(\hc{d_{i}} c_{i+1}-\hc{d}_{i-1}c_{i-1} ) \\
& \hc{G}_{i}=1+g \hc{c}_{i}\hc{c}_{i+1}  \;  .  \\
\end{aligned}
\right.
\end{align}
Substituting these into the Hamiltonian \eqref{formean}, we recover the original Hamiltonian \eqref{BCS-like} 
except for the constant term $g^2L/2$:
\begin{equation}
\begin{split}
H   =& \sum_{i=1}^{L} \qty(
\hf \acom{F_i}{\hc{F}_i}+\hf \acom{G_i}{\hc{G}_i}
)  \\
= & \qty(1+\frac{g^{2}}{2})L
+g\sum_{i=1}^L\qty(\hc{c}_{i}\hc{c}_{i+1}+c_{i+1}c_{i}) \\
&+g^{2} \sum_{i=1}^L\qty(
\hc{c}_{i}c_{i+2}\hc{d}_{i+1}d_{i} \HC) \\
&+g^{2} \sum_{i=1}^L\qty( -(n^{c}_{i+1}+n^{c}_{i}-1)n^{d}_{i} + n^{c}_{i}n^{c}_{i+1}
)  \;   .
\end{split}
\end{equation}

It is straightforward to apply the formalism obtained here to other supersymmetric models.  
For instance, if we use 
\begin{equation}
\label{extendedSF}
\sum_{i}\intd \theta \qty(
C_{i}-gC_{i}C_{i+1}D^c_{i}
) \HC
\end{equation}
and 
\begin{equation}
\label{Z2SF}
\sum_{i}\intd \theta \qty(
C_{i}+gC_{i-1}C_{i}C_{i+1}
) \HC  
\end{equation} 
instead of \eqref{chobatheta}, we obtain the Hamiltonian for the extended Nicolai model \cite{sannomiya2016supersymmetry} with 
a cubic fermion dispersion and the Hamiltonian for the $\mathbb{Z}_2$ Nicolai model \cite{sannomiya2017supersymmetry}, 
respectively.

\clearpage

%
\end{document}